\newtheorem{theorem}{Theorem}
\newtheorem{observation}{Observation}
\newtheorem{lemma}{Lemma}
\newtheorem{corollary}{Corollary}
\begin{document}
\title{Designing virus-resistant, high-performance \\networks: a game-formation approach}
\author{Stojan Trajanovski,~\IEEEmembership{Member,~IEEE,}
	 Fernando A. Kuipers,~\IEEEmembership{Senior Member,~IEEE,}
        Yezekael Hayel,~\IEEEmembership{Senior Member,~IEEE,}
       Eitan Altman,~\IEEEmembership{Fellow,~IEEE,}
        and Piet Van Mieghem~\IEEEmembership{Member,~IEEE}% <-this % stops a space
        
\thanks{S. Trajanovski is now with the Philips Research, Eindhoven, The Netherlands. The research was done while the first author was with Delft University of Technology (stojan.trajanovski@philips.com).}
\thanks{F. A. Kuipers and P. Van Mieghem are with Delft University of Technology, Faculty of Electrical
Engineering, Mathematics and Computer Science, P.O. Box 5031, 2600 GA Delft,
The Netherlands (\{F.A.Kuipers, P.F.A.VanMieghem\}@tudelft.nl).}% <-this % stops a space
\thanks{Y. Hayel is with University of Avignon, Avignon, France (yezekael.hayel@univ-avignon.fr).}% <-this % stops a space
\thanks{E. Altman is with University of Cote d'Azur, INRIA, BP95, 06902 Sophia Antipolis, France. He is also member of
LINCS, 23 Ave. d'Italie, 75013 Paris, France and an associate member of LIA, University of Avignon, France (eitan.altman@inria.fr).}% is with INRIA, Sophia Antipolis, France (eitan.altman@inria.fr).}%
%\thanks{Manuscript received April 19, 2005; revised December 27, 2012.}
\thanks{This article builds on our work~\cite{CDC2015_GameFormationVirusSpread} published in Proc. of IEEE CDC 2015, Osaka, Japan.}
}

\maketitle

\begin{abstract}Designing an optimal network topology while balancing multiple, possibly conflicting objectives like cost, performance, and resiliency to viruses is a challenging endeavor, let alone in the case of decentralized network formation. We therefore propose a game-formation technique where each player aims to minimize its cost in installing links, the probability of being infected by a virus and the sum of hopcounts on its shortest paths to all other nodes.

In this article, we (1) determine the Nash Equilibria and the Price of Anarchy for our novel network formation game, (2) demonstrate that the Price of Anarchy (PoA) is usually low, which suggests that (near-)optimal topologies can be formed in a decentralized way, and (3) give suggestions for practitioners for those cases where the PoA is high and some centralized control/incentives are advisable.
\end{abstract}
\vspace{2em}

\begin{IEEEkeywords}
Game theory; Virus spread; Network performance; Network design; Networks of Autonomous Agents.
\end{IEEEkeywords}

\section{Introduction}\label{sec:introduction}
Designing communication and computer networks are complex processes in which careful trade-offs have to be made with respect to performance, resiliency/security and cost investments. For instance, if a host in a computer network wants to route traffic to multiple other hosts, it could directly connect to those other hosts, in this way increasing its expenses in installing and maintaining these connections and at the same time also becoming more susceptible to viruses from those other hosts. In return, it would obtain a better and faster performance with minimum delays, compared to when it would have used intermediate hosts as relays. Although in this example, both installation costs and risk to viruses are increasing, they are linearly independent and they do not necessarily optimize together. Indeed, reducing the number of direct connections would reduce the cost and the host would be less vulnerable to viruses. However, even when being connected to a few high-degree nodes with direct connections, the host would still be seriously imposed to a virus.% at a high risk of becoming infected by a virus. 

In practice, hosts often are autonomous, act independently and do not coordinate as in P2P networks~\cite{neglia07}, peering relations between Autonomous Systems~\cite{Corbo2005}, overlay networks~\cite{Chun2004}, wireless~\cite{Shakkottai2007,Krunz2014,mackenzie2006game} and mobile~\cite{Iosifidis2014} networks, resource sharing in VoIP networks~\cite{Watanabe2008}, social networks~\cite{Jiang2013,Marbach2011} or the Internet~\cite{Fabrikant:2003:NCG:872035.872088}. Their aim is to optimize their own utility functions, which are not necessarily in accordance to the global optimum. To study global network formation under autonomous actors, the network formation game (NFG) framework~\cite{Aumann88} has been proposed. However, resilience and notably virus protection have not been taken into account in that NFG context, even though their importance is undisputed. In this paper, we therefore take the NFG framework one step further by including performance and virus protection as key ingredients. Virus propagation will be modeled by the Susceptible-Infected-Susceptible (SIS) model~\cite{Omic09} and we will evaluate the effect of uncoordinated autonomous hosts \emph{versus} the optimal network topology via standard game-theoretic concepts, such as Nash Equilibria and the Prices of Anarchy and Stability. 

Our network formation game is called the \emph{Virus Spread-Performance-Cost} (\textsc{VSPC}) game. Each node (i.e., autonomous player) attempts to minimize both the cost and infection probability, while still being able to route traffic to all the other nodes in a small number of hops. When the hopcount performance metric is irrelevant, the game is driven by the cost and virus objectives; a scenario we studied in~\cite{CDC2015_GameFormationVirusSpread}. That particular scenario resulted in sparse graphs, which may not always represent real-world networks, but it helped to understand the process of virus spread better. In this paper, we generalize those results by also including the hopcount performance metric. The probability of the node being infected and the hopcounts to the other nodes change in a different direction, for example adding a link reduces the former, but increases the latter. Therefore, there is a tradeoff in the number of added links and how these new links are best added. Moreover, the two metrics are linearly independent and closed-form expressions do not exist, which makes the problem complex. Finally, the inclusion of the hopcount allows us to better capture realistic networks. In particular, our main contributions are the following:
\begin{itemize}
\item We provide a complete characterization of the various relevant parameter settings and their impact on the formation of the topologies. 
\item We show that depending on the input, the Nash Equilibria may vary from tree graphs, via graphs of different diameters, to complete graphs. 
\item We demonstrate, both via theory and simulations, that the Price of Anarchy (PoA) is small in most of the cases, which implies that (near-)optimal topologies can be formed in a decentralized non-cooperative manner. We will also identify for which scenarios the PoA may be high. In those cases a central point of control would be desirable to limit/steer the players' decisions.
\end{itemize}

This paper is organized as follows: The SIS-virus spread model and the network-formation game model are introduced in Section~\ref{sec:Models}. The \emph{Virus Spread-Performance-Cost} (\textsc{VSPC}) game formation is analyzed in Section~\ref{sec:networkFormationVirusSpreadPerformance}. Related work on game formation and protection against viruses is discussed in Section~\ref{sec:RelatedWork}. The conclusion and directions for future work are provided in Section~\ref{sec:Conclusion}.

\section{Models and problem statements}\label{sec:Models}
\subsection{Virus-spread model}
The spread of viruses in communication and computer networks can be described, using virus-spread epidemic models~\cite{Omic09,Chakrabarti2008,Ganesh2005}. We consider the Susceptible-Infected-Susceptible (SIS) NIMFA model~\cite{Omic09,VanMieghem2011},
\begin{align}  
\frac{dv_{i}\left( t\right) }{dt}&=\beta\left( 1-v_{i}\left( t\right)
\right) \sum\limits_{j=1}^{N}a_{ij} v_{j}\left( t\right) -\delta v_{i}\left(
t\right) \label{N_intertwined}
\end{align}
where $N$ is the number of network nodes and $v_{i}(t)$ is the probability of node $i$
being infected at time $t$, for all $ i \in \{1,2,\ldots, N\}$. If a link is present between nodes $i$ and $j$, then $a_{ij}=1$, otherwise $a_{ij}=0$. In (\ref{N_intertwined}), a host with a virus can infect its direct healthy neighbors with rate $\beta$, while an infected host can be cured at rate $\delta$, after which the node becomes healthy, but susceptible again to the virus. The probability $v_{i}(t)$ depends on the probabilities $v_{j}(t)$ of the neighbors $j$ of node $i$ and there is no trivial closed form expression for $v_{i}(t)$. The model incorporates the network topology and is thus more realistic than the related population dynamic models. The model relies on the network topology, which makes it more realistic than the related population dynamic models. The goodness of the model has been evaluated in~\cite{PhysRevE.91.032812}. The probability of a node being infected in the metastable regime, denoted by $v_{i \infty}$, where $\frac{dv_{i}\left( t\right) }{dt} = 0$ and $v_{i \infty} \neq 0$, follows from (\ref{N_intertwined}) as~\cite{Omic09},
\small{
\begin{align}
v_{i \infty} &= 1 - \frac{1}{1+ \tau \sum_{j=1}^N a_{ij} v_{j \infty}} \label{expression_steadyState}
\end{align} 
}\normalsize where $\tau = \frac{\beta}{\delta}$ is called the \emph{effective infection} rate. The epidemic threshold $\tau_c$ is defined as a value of $\tau$, such that $v_{i \infty} >0$ if $\tau > \tau_c$, and otherwise $v_{i \infty} =0$  for all $i \in \{1,2,\ldots, N\}$. The value of  $v_{i \infty}$ depends of the values of all $v_{j \infty}$ for all the neighbors $j$ of $i$, so the network topology and the interconnectivity have impacts on $v_{i \infty}$s.

\subsection{Game-formation model}
In our network formation game, each player $i$ (a node in the network) aims to minimize its own \emph{cost function} $J_i$, and the \emph{social cost} $J$ is defined as $J = \sum_{i=1}^N J_i$. Specifically, the \emph{optimal social cost} is the smallest social cost over all possible connected topologies. We look for the existence, uniqueness, and characterization of (\emph{pure}) \emph{Nash Equilibria}\footnote{A Nash Equilibrium is the state of the players' network strategies, where none of the players can reduce its cost by unilaterally changing its strategy.}. The \emph{Price of Anarchy} (\emph{PoA}) and the \emph{Price of Stability} (\emph{PoS}) are defined as the ratio of social cost in the worst-case Nash Equilibrium (the one with highest social cost) and the optimal social cost, and the ratio of the social cost in the best-case Nash Equilibrium (the one with lowest social cost) and the optimal social cost, respectively:
\small{
\begin{align}
\text{PoA} = \frac{J(\text{worst NE})}{\min J}, \text{ PoS} = \frac{J(\text{best NE})}{\min J}. \label{PoA_PoS_definitions}
\end{align}
}\normalsize PoA is an efficiency measure, illustrating how bad selfish playing is, in comparison to the global optimum. PoS, on the other hand, reflects the best possible performance without coordination in comparison to the global optimum. The network about to be designed, is empty and every node in the network is a player. We assume the cost of building one (communication) link between two nodes is fixed. Every player $i$ can install a link from itself to another node $j$. Installing a link between $i$ and $j$ means that both $i$ and $j$ can utilize it, but only one pays for the cost, like often assumed in NFG models~\cite{Fabrikant:2003:NCG:872035.872088,Albers:2006:NEN:1109557.1109568,Chun2004}. Several examples fit this scenario: (i) a friend request is initiated by one node in a social network, but both read the posts from one another; (ii) a new road connecting two cities is built by one city in a road network, but both utilize it; and (iii) in a hand-shake protocol in a computer network one node initiates a connection used by two nodes.

We consider a \textbf{\emph{Virus Spread-Performance-Cost} (\textsc{VSPC})} network formation game, where player $i$ aims to reduce its cost and the probability $v_{i\infty}$ of being infected, but concurrently also wants to improve its performance by shortening the hopcounts $h(i,j)$ of the shortest paths to all the other nodes $j$. The cost function of player $i$ that unites these objectives is given by:
\small{
\begin{align}
J_i = \alpha \cdot k_i + \gamma \sum_{j=1}^{N} h(i,j) + v_{i\infty}. \label{node_costFunction}
\end{align}
}\normalsize Function $J_i$ involves the cost $k_i$ of installing all the links from node $i$, weighted by a coefficient $\alpha$. The hopcounts $h(i,j)$ are weighted by $\gamma$. Opposing goals meet in this game: the more links are installed, the shorter the paths, but the higher the probability of being infected and the higher the cost.

%We now proceed with the \emph{Virus Spread-Performance-Cost} (\textsc{VSPC}) game from equation (\ref{node_costFunction}) in Section~\ref{sec:Models}. 
The social cost $J$ for the whole network is a weighted sum over all nodes
\small{
\begin{align}
J &%= \alpha \sum_{i=1}^N d_i + \sum_{i=1}^N \sum_{j=1}^{N} h(i,j) + \gamma \sum_{i=1}^N v_{i\infty} \notag \\
 = \sum_{i=1}^{N} J_i = \alpha L + \gamma\sum_{i=1}^N \sum_{j=1}^{N} h(i,j) +  \sum_{i=1}^N v_{i\infty},  \label{social_optimumPerformance}
\end{align}
}\normalsize 
where $L$ denotes the number of links.% TR Minimizing $J$ is of interest of a network provider or network designer, who aims to improve the global performance, to reduce the costs  but also the overall chances of nodes being infected.% 

\section{Virus spread-performance-cost (VSPC) game}\label{sec:networkFormationVirusSpreadPerformance}

\subsection{Optimal social cost, Nash Equilibria and the PoA for $\gamma \rightarrow 0$}

In order to understand the effect of the virus protection, we start by setting $\gamma$ to an infinitely small number (approaching zero\footnote{The case of $\gamma = 0$ is either trivial or debatable. By neglecting the hopcounts, the optimal topology would be the (non-realistic) empty graph with no links (cost) and no epidemic to be propagated. Moreover, infinite hopcounts will be multiplied by $\gamma = 0$ which is undefined.}). As a result, the hopcounts are of no influence anymore, while network connectivity is still guaranteed (the hopcount between two disconnected nodes is assumed to be infinity). %zero\footnote{The case of $\gamma = 0$ is either trivial or debatable. Neglecting the hopcounts, the optimal topology would be the empty graph with no links (cost) and no epidemic to be propagated. In addition, in this case the infinite hopcounts will be multiplied by $\gamma = 0$ which is undefined.}
%
%
%In the \emph{Virus Spread-Cost} game from (\ref{def:virusSpreadCost}), the social cost $J$ for the whole network is given by
%\small{
%\begin{align}
%J &%= \alpha \sum_{i=1}^N d_i + \sum_{i=1}^N \sum_{j=1}^{N} h(i,j) + \gamma \sum_{i=1}^N v_{i\infty} \notag \\
% = \alpha \sum_{i=1}^{N} k_i + \sum_{i=1}^N v_{i\infty} = \alpha L + \sum_{i=1}^N v_{i\infty}, \label{social_optimum}
%\end{align}
%}\normalsize if the graph is connected, otherwise $J= \infty$, where $L = \sum_{i=1}^{N} k_i$ is the number of links in the network. 
Lemma~\ref{lemma:IncreasingProbabilities} limits the possible Nash Equilibria.%(proof in~\cite{CDC2015_GameFormationVirusSpread})

\begin{lemma}\label{lemma:IncreasingProbabilities} The probability $v_{i \infty} (G)$ of node $i$ being infected in the metastable state in network $G$ does not exceed the probability $v_{i \infty} (G+l)$ of node $i$ being infected in the metastable state in network $G+l$ obtained by adding a link $l$ to $G$.%~\cite{CDC2015_GameFormationVirusSpread}
\end{lemma}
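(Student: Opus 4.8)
The plan is to recognize the metastable equation~(\ref{expression_steadyState}) as the fixed-point equation of a \emph{monotone} operator and then invoke a standard sub-solution / monotone-iteration comparison. Define $F^G:[0,1]^N\to[0,1]^N$ componentwise by $F^G_i(v)=1-\tfrac{1}{1+\tau\sum_{j}a_{ij}v_j}$, so that the vector $v_\infty(G)=(v_{1\infty},\dots,v_{N\infty})$ is precisely a fixed point of $F^G$. Two monotonicity facts drive the whole argument. First, for a fixed adjacency matrix $A$ the map is componentwise nondecreasing in $v$, since $\partial F^G_i/\partial v_k=\tau a_{ik}/(1+\tau\sum_j a_{ij}v_j)^2\ge 0$. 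Second, for a fixed $v\ge 0$ it is nondecreasing in the entries of $A$: raising any $a_{ik}$ enlarges $\sum_j a_{ij}v_j$ because $v_k\ge 0$, and $t\mapsto 1-1/(1+t)$ is increasing. Thus the SIS steady state is the fixed point of a cooperative (order-preserving) system, which is exactly the structure that makes link addition monotone.

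First I would treat the generic above-threshold case in which $v_\infty(G)\neq 0$. Writing $A'$ for the adjacency matrix of $G+l$, we have $A'\ge A$ entrywise, so the second monotonicity fact gives $F^{G+l}(v)\ge F^{G}(v)$ for every $v\ge 0$. Evaluating at the fixed point $v_\infty(G)$ yields $F^{G+l}(v_\infty(G))\ge F^{G}(v_\infty(G))=v_\infty(G)$, that is, $v_\infty(G)$ is a \emph{sub-solution} for $F^{G+l}$. Because $F^{G+l}$ is monotone, iterating it from $v_\infty(G)$ produces a componentwise nondecreasing sequence $v_\infty(G)\le F^{G+l}(v_\infty(G))\le (F^{G+l})^2(v_\infty(G))\le\cdots$, bounded above by the all-ones vector, which therefore converges to a fixed point $v^\star\ge v_\infty(G)$ of $F^{G+l}$.

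The main obstacle is to certify that this limit $v^\star$ is exactly the metastable state $v_\infty(G+l)$ rather than some spurious fixed point. Here I would appeal to the known fact that, above the epidemic threshold, $F^{G+l}$ has precisely two fixed points, the zero vector and a unique strictly positive one, the latter being $v_\infty(G+l)$~\cite{Omic09,VanMieghem2011}. Since $v_\infty(G)\neq 0$ and the iteration is nondecreasing, $v^\star\ge v_\infty(G)\neq 0$, so $v^\star$ cannot be the zero fixed point and must equal $v_\infty(G+l)$; hence $v_\infty(G+l)\ge v_\infty(G)$ componentwise, in particular in coordinate $i$. One also needs that $G+l$ is itself above threshold, which follows from Perron--Frobenius monotonicity: $A'\ge A$ forces $\rho(A')\ge\rho(A)>1/\tau$. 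Finally, the remaining case $v_\infty(G)=0$ is immediate, as probabilities are nonnegative and thus $v_{i\infty}(G+l)\ge 0=v_{i\infty}(G)$. Combining the two cases proves the claim.
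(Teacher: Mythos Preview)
Your argument is correct and takes a genuinely different route from the paper. The paper works directly with the \emph{canonical infinite form}~(\ref{canonicalExpression}), a continued-fraction expansion of $v_{i\infty}$, and tracks how the insertion of link $(a,b)$ perturbs the degrees $d_a,d_b$ and introduces the new adjacency entry $a_{ab}$ at every level of the expansion; the conclusion is that each depth of the continued fraction increases, forcing $v_{i\infty}$ up. You instead cast~(\ref{expression_steadyState}) as the fixed point of a cooperative map $F^G$, observe the two monotonicities (in $v$ and in $A$), and run a sub-solution comparison: $v_\infty(G)$ is a sub-fixed-point of $F^{G+l}$, monotone iteration pushes it up to a nonzero fixed point, and the cited uniqueness of the positive equilibrium~\cite{Omic09,VanMieghem2011} pins that limit at $v_\infty(G+l)$.

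What each buys: the continued-fraction route is self-contained and gives some structural intuition about \emph{where} in the recursion the extra link acts, but the monotonicity bookkeeping across infinitely many nested levels is delicate to make fully rigorous. Your monotone-operator route is cleaner and standard (it is essentially the comparison principle for cooperative systems), and it makes transparent why the result holds: link addition enlarges the order-preserving map. The only external input you need is the uniqueness of the nonzero NIMFA fixed point above threshold on a connected graph, which is exactly what the cited references provide; connectedness is implicit throughout the paper since disconnected graphs carry infinite hopcount cost. Your Perron--Frobenius step $\rho(A')\ge\rho(A)$ to keep $G+l$ above threshold, and the trivial $v_\infty(G)=0$ case, close the argument neatly.
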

\begin{proof}
%Due to space limitation, the proof is given in our Tech. Report~\cite{TechReport_Infocom}.
The newly added link $l=(a,b)$ is between nodes $a$ and $b$. We make use of the canonical infinite form~\cite{Omic09},
\small{
\begin{align}
& v_{i \infty} = 1 - \frac{1}{1+\tau d_i - \tau \sum_{j=1}^N \frac{a_{ij}}{1+ \tau d_{j} - \tau \sum_{k=1}^N \frac{a_{jk}}{1+ \tau d_k - \ddots}}}. \label{canonicalExpression}%\tau \sum_{q = 1}^N \frac{a_{kq}}{1+ \tau d_q - \ddots}
\end{align}
}\normalsize After the addition of link $l = (a,b)$, the expression (\ref{canonicalExpression}) for $v_{i \infty} (G+l)$ has all the terms the same as in $v_{i \infty} (G)$, except the following differences: $d_a \rightarrow d_a + 1$; $d_b \rightarrow d_b + 1$ and the presence of the adjacency entry $a_{ab} = 0 \rightarrow a_{ab} = 1$ in the canonical representation. The last statement implies that its contribution is a part that is the same as in $v_{i \infty} (G)$ until it ``reaches" nodes $a$ or $b$, where the expression (at a certain depth of the canonical form) is:
\footnotesize{
\begin{align}
&\tau (d_a + 1) - \frac{\tau}{\tau (d_b+1) - \ldots} + U% \notag\\
= \tau d_a + U  + \tau(1-\frac{1}{\tau (d_b+1) - \ldots}), \label{eq:DifferenceAddingLink}
\end{align}
}\normalsize where $d_a$ and $d_b$ are the degrees of $a$ and $b$ in $G$, while $U$ is the remaining part in the canonical form. In (\ref{eq:DifferenceAddingLink}), the term $\tau(1-\frac{1}{\tau (d_b+1) - \ldots})$ is positive and $U$ increases with $d_a$ and $d_b$. $U$ increases with $d_a$ and $d_b$ as it is also an infinite canonical form with any of these two variables being in the numerator or in the denominator with a negative sign in front, in the same way as explained in the lines above - repeating infinitely many times. Therefore, the whole term in (\ref{eq:DifferenceAddingLink}) increases, which implies that $v_{i \infty} (G+l) > v_{i \infty} (G)$ also increases for each node $i$.
%
%Expression (\ref{canonicalExpression}) for $v_{i \infty} (G+l)$ has all the terms the same as in $v_{i \infty} (G)$, except where  $a$ and $b$ are neighbors in the canonical representation. The last statement implies that its contribution is a part that is the same as in $v_{i \infty} (G)$ plus the term:
%\footnotesize{
%\begin{align}
%&\tau (d_a + 1) - \frac{\tau}{\tau (d_b+1) - \ldots} - U% \notag\\
%= \tau d_a - U  + \tau(1-\frac{1}{\tau (d_b+1) - \ldots}), \label{eq:DifferenceAddingLink}
%\end{align}
%}\normalsize where $d_a$ and $d_b$ are the degrees of $a$ and $b$ in $G$, while $U$ is the remaining part in the canonical form at the certain depth. In (\ref{eq:DifferenceAddingLink}), the term $\tau(1-\frac{1}{\tau (d_b+1) - \ldots})$ is positive and $U$ decreases. Therefore, the whole term in (\ref{eq:DifferenceAddingLink}) increases, which implies that $v_{i \infty} (G+l) >v_{i \infty} (G)$ also increases for each node $i$.
\end{proof}

%\subsection{Optimal social cost, Nash Equilibria and the Price of Anarchy}
We start by looking into the possible Nash Equilibria.% and following Theorem~\ref{theorem:NashEqulibriumMustBeTrees} narrows down the possibilities.
\begin{theorem}\label{theorem:NashEqulibriumMustBeTrees}
If a Nash Equilibrium is reached, then the constructed graph is a \emph{tree}.
\end{theorem}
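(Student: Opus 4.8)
The plan is to prove the two defining properties of a tree separately: that any Nash Equilibrium graph is \emph{connected}, and that it is \emph{acyclic}; together these give a tree on $N$ nodes. Throughout I work in the $\gamma \rightarrow 0$ regime, where the cost of player $i$ is driven by $\alpha k_i + v_{i\infty}$, with the hopcount term present only to enforce connectivity.

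First I would establish connectivity. If the equilibrium graph $G$ were disconnected, pick a node $i$ whose component does not contain some node $j$; then $h(i,j) = \infty$, so the hopcount term $\gamma \sum_{j} h(i,j)$ is infinite for any positive $\gamma$. Player $i$ can unilaterally install a single link bridging its component to another, turning these infinite hopcounts into finite ones. Since the increase in $\alpha k_i$ is exactly $\alpha$ and, by Lemma~\ref{lemma:IncreasingProbabilities}, the increase in $v_{i\infty}$ is finite, player $i$ strictly lowers $J_i$ from an infinite to a finite value. This contradicts the equilibrium condition, so $G$ must be connected.

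Second, I would rule out cycles by a removal/deviation argument. Suppose $G$ contains a cycle and let $l$ be any edge lying on it. Since $l$ is not a bridge, $G-l$ is still connected, so every hopcount in $G-l$ remains finite. Each link is paid for by exactly one of its two endpoints; let $i$ be the player that installed $l$, so dropping $l$ is an admissible deviation for $i$. This change (i) reduces $\alpha k_i$ by exactly $\alpha$, (ii) reduces $v_{i\infty}$ strictly, by the reverse direction of Lemma~\ref{lemma:IncreasingProbabilities} (removing a link strictly decreases every node's infection probability), and (iii) alters the hopcount term only by a finite amount, which is multiplied by $\gamma \rightarrow 0$ and hence vanishes in the limit. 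The net effect is a strict decrease of $J_i$, contradicting the assumption that $G$ is an equilibrium. Hence $G$ has no cycles, and being connected and acyclic it is a tree.

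The step I expect to be most delicate is (iii): making precise that in the $\gamma \rightarrow 0$ regime the bounded change in hopcounts—guaranteed finite precisely because we remove only a non-bridge edge—is dominated by the strict, order-one gains from the lower link cost $\alpha$ and the lower infection probability. This is why the connectivity step must be handled first, so that the domination argument is never confronted with an infinite hopcount term, and why the removed edge is chosen on a cycle rather than arbitrarily.
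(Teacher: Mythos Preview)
Your proof is correct and follows essentially the same route as the paper: connectivity is forced by the infinite hopcount penalty (the paper leaves this implicit), and any cycle edge can be profitably dropped by the endpoint that paid for it, with Lemma~\ref{lemma:IncreasingProbabilities} supplying the strict decrease in $v_{i\infty}$. The only imprecision is that a \emph{single} bridging link need not render $J_i$ finite when there are three or more components; this is trivially patched by having $i$ install one link to each other component (or by iterating the argument), and does not affect the structure of the proof.
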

\begin{proof}
If $G$ is connected and each node can reach every other node, then changing the strategy of node $i$ from the current one to \emph{investing in an extra link}, will increase both its cost (by $1$, scaled by $\alpha$) and $v_{i\infty}$ (by Lemma~\ref{lemma:IncreasingProbabilities}). Hence, unilaterally \emph{investing in an extra link} is not beneficial for a node.% (Its cost utility is increased.)

We now assume that $G$ is not a tree. Then, there is at least one cycle in this graph. If a node $i$ in that cycle changes its strategy from \emph{investing} in a link in that cycle to \emph{not investing}, the cost is decreased by $1$ (weighted by $\alpha$) and all the other nodes in the graph are still reachable from $i$. Moreover, by \emph{not investing} in that link, node $i$ decreases its probability $v_{i \infty}$ of being infected in the metastable state, according to Lemma~\ref{lemma:IncreasingProbabilities}. Hence, by unilaterally changing its strategy, node $i$ decreases its cost utility $J_i$, which is in contradiction with a Nash Equilibrium.
\end{proof}

\begin{observation}\label{observ:WhichTreesNE}
A Nash Equilibrium is achieved for both the \emph{star} graph and the \emph{path} graph, but not all trees are Nash Equilibria.
\end{observation}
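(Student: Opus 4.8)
The plan is to work in the regime $\gamma \to 0^{+}$, where the cost (\ref{node_costFunction}) effectively reduces to $J_i = \alpha k_i + v_{i\infty}$ while an infinite hopcount still forbids disconnection (its contribution $\gamma\cdot\infty$ blows up). By Theorem~\ref{theorem:NashEqulibriumMustBeTrees} every equilibrium is a tree, and in a tree every edge is a bridge. Hence, for the \emph{unique} builder of an edge, the only feasible deviations are (i) adding edges, which raises $\alpha k_i$ and, by Lemma~\ref{lemma:IncreasingProbabilities}, also $v_{i\infty}$, and is therefore never profitable, and (ii) a cost-neutral \emph{rewiring}: dropping an edge $(i,j)$ it builds and installing one replacement edge $(i,j')$ to the far side of the resulting split so that connectivity is preserved. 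So the whole argument reduces to deciding, for each candidate tree and each builder assignment, whether some rewiring strictly lowers $v_{i\infty}$.

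For the \emph{star} I would assign every edge to the center $c$. Each leaf then builds nothing, so it has no edge to drop or rewire and adding an edge only hurts it; the center is adjacent to all other nodes, so there is no vertex to rewire toward, and dropping any edge orphans a leaf that it can only reconnect by rebuilding the same edge. No profitable deviation exists, so the star is an equilibrium. For the \emph{path} $1-2-\cdots-N$ I would let node $i$ build $(i,i+1)$. Dropping this edge splits the path into $\{1,\dots,i\}$ and $\{i+1,\dots,N\}$, and any cost-neutral reconnection attaches $i$ to a single vertex of the far subpath. If $i$ attaches to an \emph{endpoint} of that subpath the result is a path isomorphic to the original, so $v_{i\infty}$ is unchanged; if it attaches to an \emph{interior} vertex, that vertex acquires degree $3$ and $i$ now sits next to a higher-degree neighbor, which should strictly raise $v_{i\infty}$. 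Thus no rewiring strictly helps and the path is an equilibrium.

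For the final clause I would exhibit an explicit tree that is an equilibrium under \emph{no} assignment. Take the \emph{double broom} $D$ on six nodes: two hubs $u$ and $v$ joined by a bridge $(u,v)$, with pendant leaves $a_1,a_2$ on $u$ and $b_1,b_2$ on $v$. In any equilibrium-candidate profile realising $D$ the bridge is paid for by exactly one endpoint (a redundant second payer would simply drop its copy and save $\alpha$). Whoever builds it can rewire it to the opposite side — say $u$ drops $(u,v)$ and installs $(u,b_1)$ — which keeps the graph connected and $k_u$ fixed but replaces the degree-$3$ neighbor $v$ by the degree-$2$ neighbor $b_1$, strictly lowering $v_{u\infty}$; the situation is symmetric for $v$. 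Hence the bridge's builder always has a profitable deviation, so $D$ is an equilibrium under no assignment, proving that not all trees are equilibria.

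The main obstacle, shared by the path and the counterexample, is the monotonicity claim that replacing a neighbor by one of strictly lower (resp. higher) degree strictly decreases (resp. increases) a node's metastable infection. This does \emph{not} follow from Lemma~\ref{lemma:IncreasingProbabilities} alone: a rewiring compares $G$ with $G' = G-l+l'$, and the lemma only yields $v_{i\infty}(G) > v_{i\infty}(G-l)$ and $v_{i\infty}(G') > v_{i\infty}(G-l)$, which fails to order $v_{i\infty}(G)$ and $v_{i\infty}(G')$. I would close this gap by refining the canonical-form argument behind Lemma~\ref{lemma:IncreasingProbabilities}: expand $v_{i\infty}$ via (\ref{canonicalExpression}) and track, term by term, how the continued fraction changes when the pendant subtree hung at the rewired neighbor is replaced by a lighter or heavier one. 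Since the topologies involved are small paths and brooms, a fallback is to solve the fixed-point system (\ref{expression_steadyState}) on each explicit graph and compare the resulting $v_{i\infty}$ directly.
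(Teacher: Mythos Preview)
Your proposal is correct and follows essentially the same route as the paper: identical builder assignments (center pays in the star; node $i$ pays for $(i,i{+}1)$ in the path), the same case split for path rewirings (endpoint $\Rightarrow$ isomorphic path, interior $\Rightarrow$ degree-$3$ neighbor raises $v_{i\infty}$), and the same double-broom counterexample with the central-edge rewiring. You are in fact more explicit than the paper about the one real gap --- that Lemma~\ref{lemma:IncreasingProbabilities} does not by itself order $v_{i\infty}(G)$ against $v_{i\infty}(G-l+l')$ --- and your proposed fix via the canonical form~(\ref{canonicalExpression}) is exactly what the paper invokes (rather tersely, in a footnote) at that point.
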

\begin{proof}
Let us consider a \emph{star graph}, where all the links are installed by the root node as shown in Fig.~\ref{fig:Star}. (A link is installed and paid for by the node marked with $p$.) The root node cannot unilaterally decrease its cost, because cutting at least one of its installed links would disconnect it, while installing a link from a leaf node $i$ would increase both $k_i$ and $v_{i \infty}$ (Lemma~\ref{lemma:IncreasingProbabilities}). Hence, the \emph{star graph} is a Nash Equilibrium. %($J_r = \infty$)

\begin{figure*}[h!tb]
\centering
\subfloat[Star $K_{1,N-1}$.]{\hspace{-1cm}
\includegraphics[trim = 0mm 0mm 0mm 00mm,clip,width=0.2\textwidth]{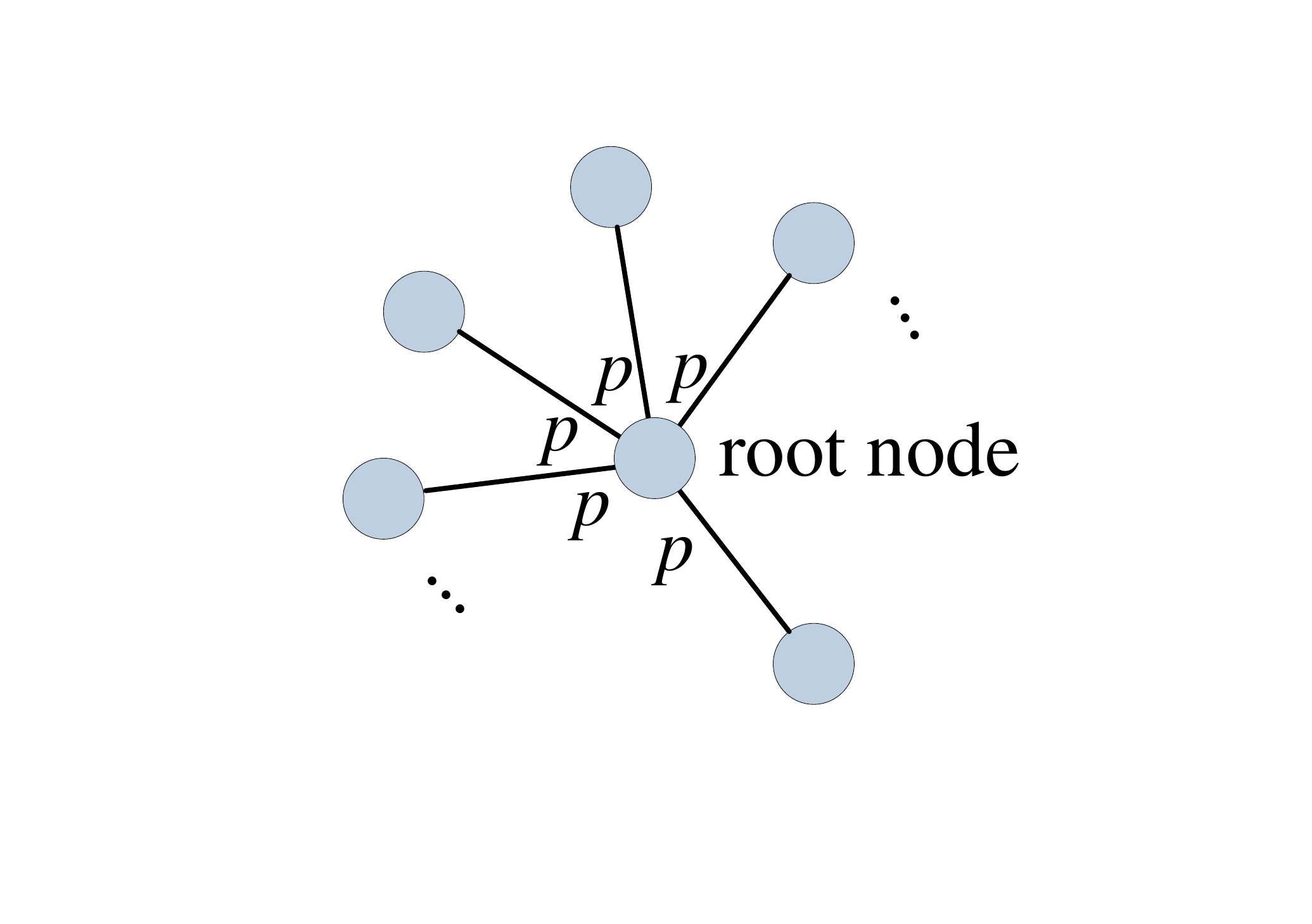}
\label{fig:Star}}
\subfloat[Path $P_{N}$.]{
\includegraphics[trim = 0mm 0mm 0mm 00mm,clip,width=0.2\textwidth]{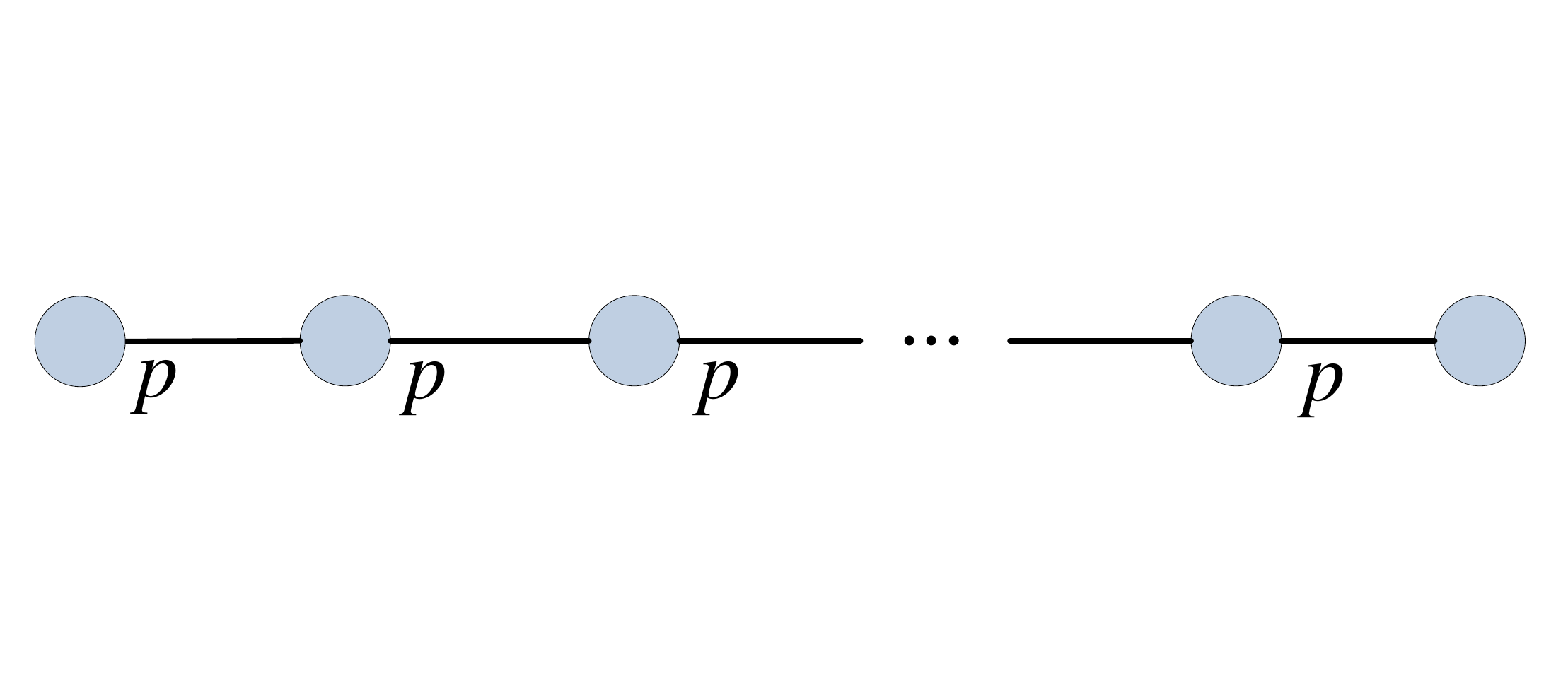}
\label{fig:Path}}
\subfloat[Tree $T'$.]{
\includegraphics[trim = 0mm 0mm 0mm 00mm,clip,width=0.2\textwidth]{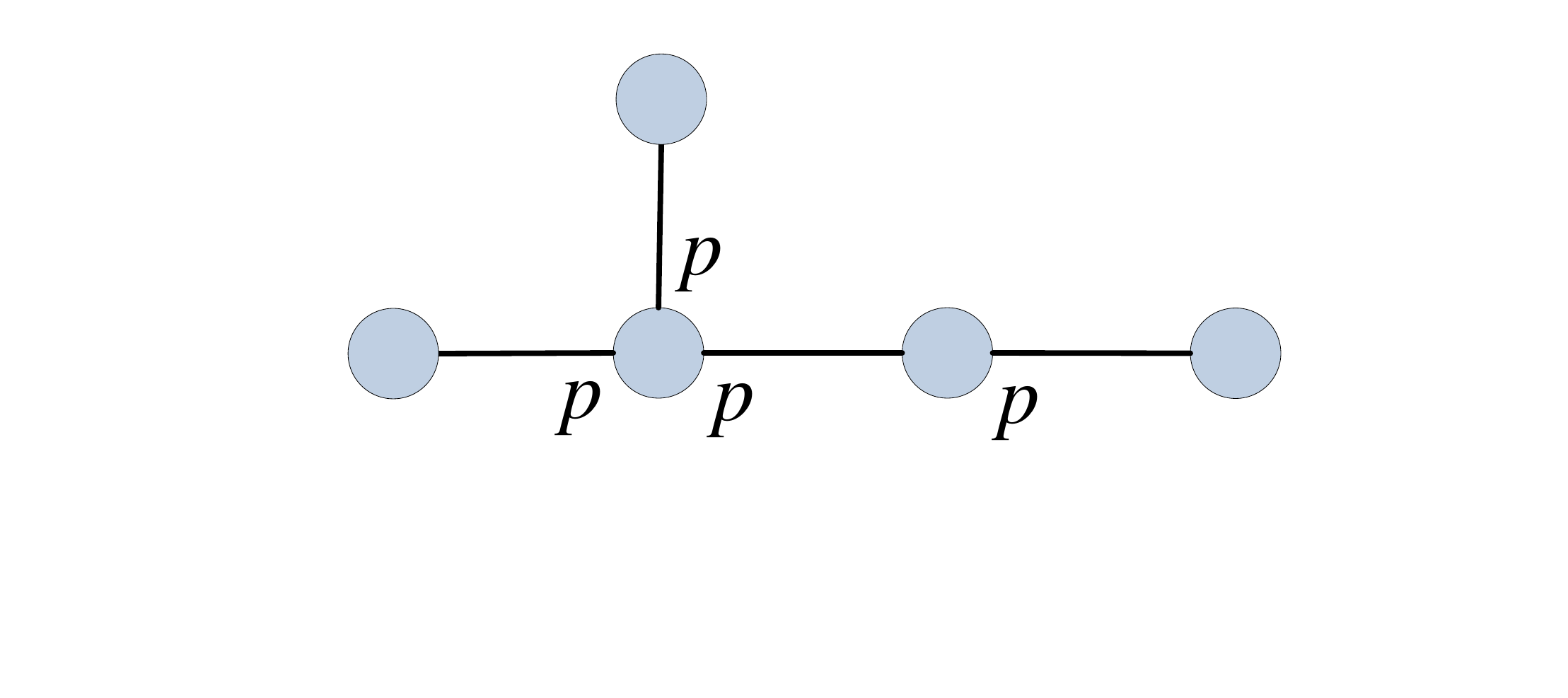}
\label{fig:tree5}}
\subfloat[Tree $T''$.]{
\includegraphics[trim = 0mm 0mm 0mm 00mm,clip,width=0.2\textwidth]{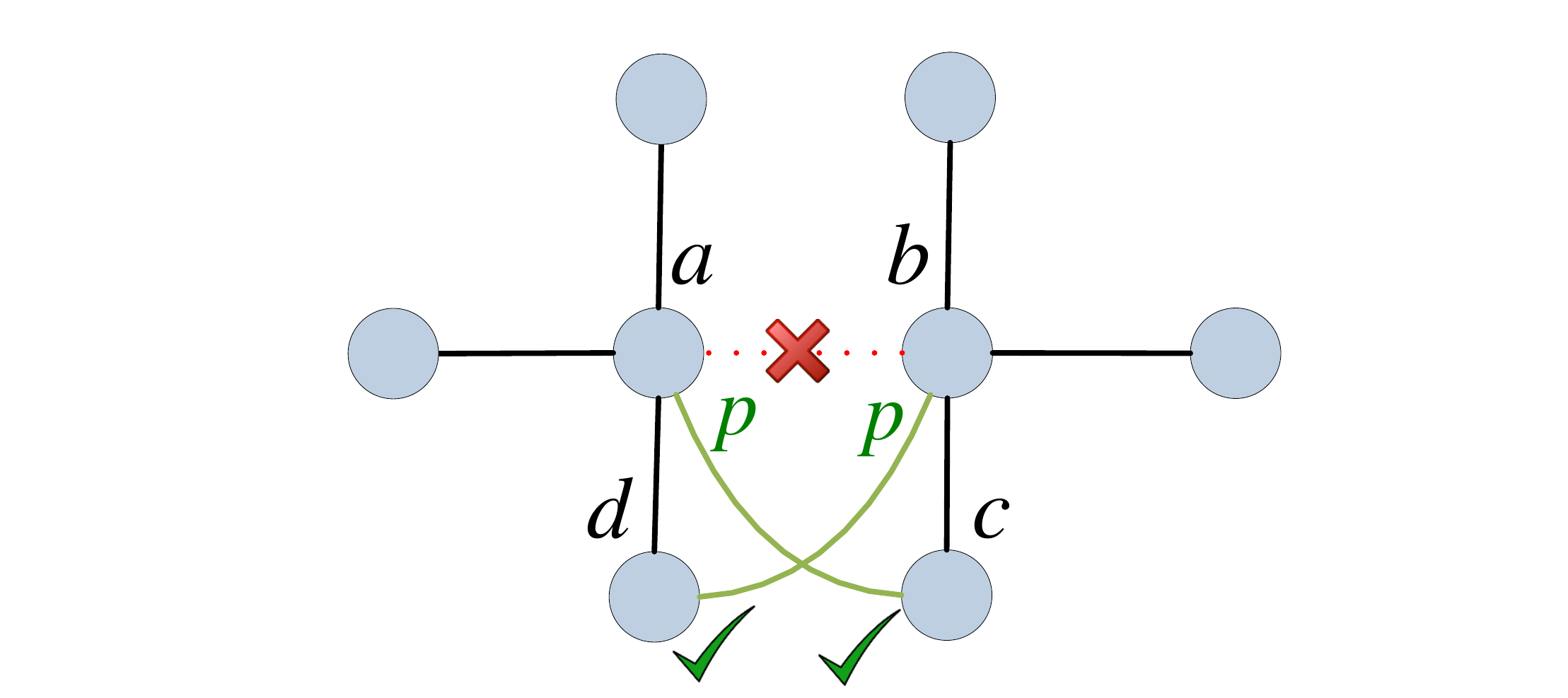}
\label{fig:treeNonExistenceNE}}
\subfloat[Re-wiring increases $v_{i \infty}$.]{
\includegraphics[trim = 0mm 0mm 0mm 00mm,clip,width=0.2\textwidth]{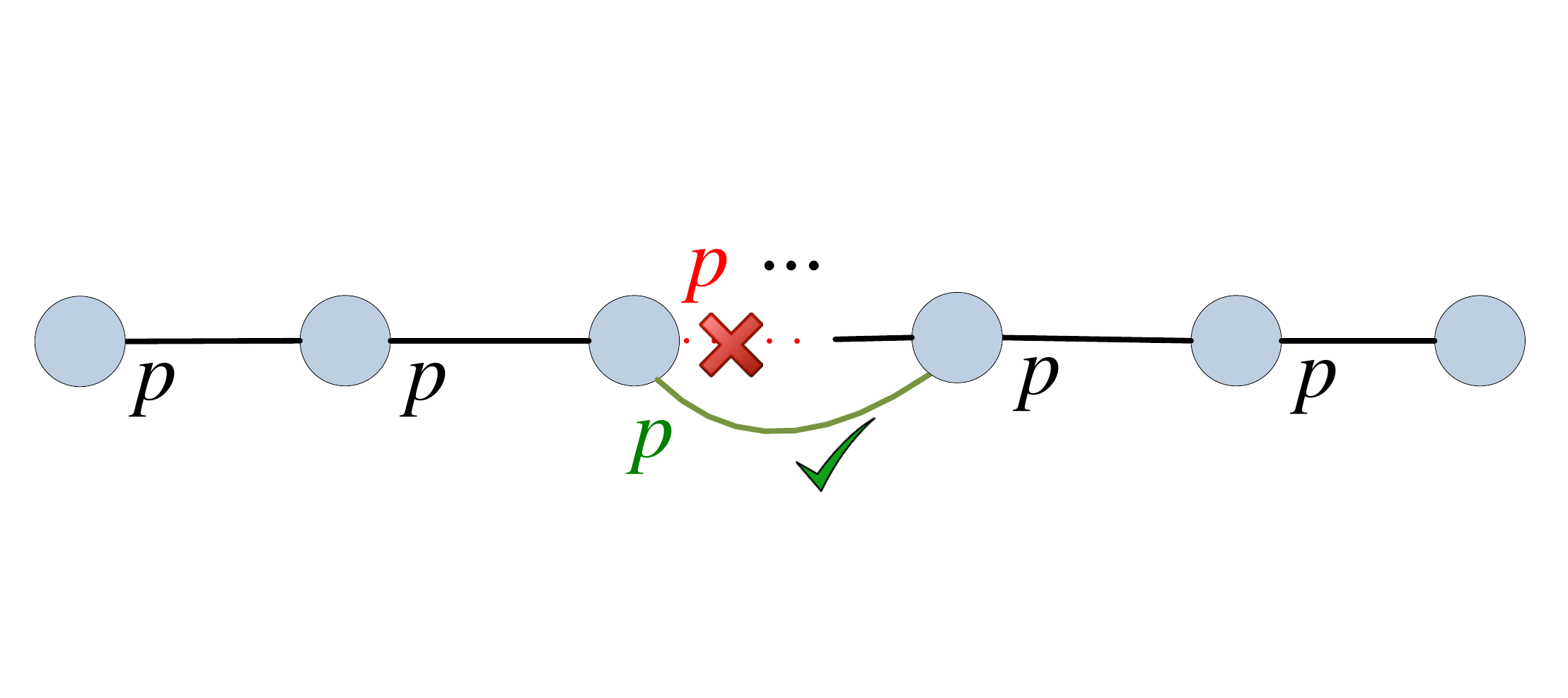}
\label{fig:treeRewiring}}
\caption[]{A link is installed by the end-node marked with $p$. Trees in (a), (b), and (c) are Nash Equilibria. (d) Tree $T''$ cannot be a Nash Equilibrium.} \label{fig:NashEquilibriaObservationsTrees}
\end{figure*}

Let us now assume that a \emph{path graph} (Fig.~\ref{fig:Path}) is constructed, such that $(N-1)$ nodes invest in exactly one link and one of the leaves does not invest in installing a link. Similarly as for a \emph{star graph}, none of the nodes can unilaterally decrease their cost by just installing extra links or cutting some of them. A "re-wiring''\footnote{"Re-wiring" is a process of removing a link to node $k$ initiated by node $i$ and establishing a new link to another node $j$. The degree of node $i$ does not change, while the degrees of $k$ and $j$ are decreased and increased, respectively.} from one of the nodes by re-directing its installed links to another node may be in order. In such a case, if node $i$ ``re-wires'' its installed link to another node, then $J_i$ would not decrease. 1) If it is installed to one of the leaves, such that the graph is connected, we end up with an isomorphic graph, where the position of $i$ is the same as in the initial graph, so $J_i$ stays the same. 2) If $i$ "re-wires'' to one of the other nodes $j$ (w.l.o.g., $i<j$) as visualized in Fig.~\ref{fig:treeRewiring}, $i$ would have the same degree, but its ``new neighbor'' would have a degree $3$ instead of $2$. The degree of $j$ increases by $1$ to $3$ and the degree of $(i+1)$ decreases by $1$ to $1$ (node $(i+1)$  will become terminal and "far" from $i$), while all the other degrees remain the same. Moreover, $i$ would be equally close to any of the nodes ``behind'' $\{1,\ldots,i-1\}$, closer to the nodes ``at the end'' $\{j+1,\ldots,N\}$ and equally close to the nodes in the set $\{i+1,\ldots,j-1\}$, but just in a reverse order. Based on the canonical infinite form (\ref{canonicalExpression}), $v_{i \infty}$ would increase\footnote{$v_{i \infty}$ in (\ref{canonicalExpression}) would have bigger values by having nodes with "bigger degrees" as as close as possible (i.e. in fewer hops) to the node.}. Therefore, the \emph{path graph} is also a Nash Equilibrium. %Hence, $v_{i\infty}$ and $k_i$ stay the same,

There are also other \emph{trees} that are Nash Equilibria (e.g., $T'$ given in Fig.~\ref{fig:tree5}). Moreover, there are values of $\tau$ such that worst- and best-case Nash Equilibria are achieved for trees different from star $K_{1,N-1}$ and path $P_N$ graphs. For $\tau \in [1.475, 1.589]$, tree $T'$ is the best-case Nash Equilibrium and has optimal social cost.

However, not all the trees are Nash Equilibria. For example, the tree given in Fig.~\ref{fig:treeNonExistenceNE}. Here, whomever pays for the ``central'' link between $a$ and $b$, can reduce its cost utility by ``re-wiring'' to $c$ or $d$.
\end{proof}
We proceed by characterizing the worst- and best-case Nash Equilibria.

\begin{theorem}\label{upperAndLowerBound_J_OnTrees}
For sufficiently high effective infection rate $\tau$, the optimal social cost and the \emph{best-case} Nash Equilibrium are achieved by the \emph{star graph} $K_{1,N-1}$, while the \emph{worst-case} Nash Equilibrium is achieved for the \emph{path graph} $P_N$,
\begin{align*}
J(K_{1,N-1}) \le J \le J(P_N).
\end{align*}
\end{theorem}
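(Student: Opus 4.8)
The plan is to first collapse the statement onto trees. Since $\gamma\to 0$, the hopcount term in the social cost~(\ref{social_optimumPerformance}) vanishes and $J=\alpha L+\sum_{i=1}^N v_{i\infty}$. By Lemma~\ref{lemma:IncreasingProbabilities}, deleting any link that does not disconnect $G$ strictly decreases every $v_{i\infty}$ while also lowering $L$; hence the global minimiser of $J$ over all \emph{connected} topologies is a minimally connected graph, i.e.\ a tree, and by Theorem~\ref{theorem:NashEqulibriumMustBeTrees} every Nash Equilibrium is a tree as well. On any tree $L=N-1$ is fixed, so $J$ differs between trees only through $S(G):=\sum_{i=1}^N v_{i\infty}$. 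The whole theorem therefore reduces to the single extremal claim: \emph{for $\tau$ large enough, among all trees on $N$ nodes the star $K_{1,N-1}$ uniquely minimises $S(G)$ and the path $P_N$ uniquely maximises it.} Combined with Observation~\ref{observ:WhichTreesNE} (which makes both graphs Nash Equilibria), this identifies $K_{1,N-1}$ as the optimal social cost and best-case equilibrium, and $P_N$ as the worst-case equilibrium.

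Next I would extract the large-$\tau$ behaviour of $S(G)$. Writing $v_{i\infty}=1-\varepsilon_i$, the fixed-point relation~(\ref{expression_steadyState}) becomes $\varepsilon_i=\bigl(1+\tau d_i-\tau\sum_{j}a_{ij}\varepsilon_j\bigr)^{-1}$, where $d_i$ is the degree of $i$ and $\sum_j a_{ij}=d_i$. For $\tau>\tau_c$ a metastable solution exists and, since each $\varepsilon_i=O(1/\tau)$, the coupling term obeys $\tau\sum_j a_{ij}\varepsilon_j=O(1)$, so the denominator is $\tau d_i\bigl(1+O(1/\tau)\bigr)$ and node by node $\varepsilon_i=\frac{1}{\tau d_i}+O(\tau^{-2})$. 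Summing yields the key expansion $S(G)=N-\frac{1}{\tau}\sum_{i=1}^N\frac{1}{d_i}+O(\tau^{-2})$, so to leading order ordering the trees by $S(G)$ is exactly the \emph{reverse} of ordering them by the degree functional $\Phi(G):=\sum_{i=1}^N 1/d_i$.

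It then remains to locate the extrema of $\Phi$ over trees. Every tree satisfies the handshake constraint $\sum_i d_i=2(N-1)$ with all $d_i\ge 1$, so $\Phi$ is a sum of the convex function $x\mapsto 1/x$ over a degree sequence of fixed total. By convexity/majorization, $\Phi$ is maximised when the sequence is as spread out as a tree allows — one centre of degree $N-1$ and $N-1$ leaves, i.e.\ the star, giving $\Phi(K_{1,N-1})=(N-1)+\frac{1}{N-1}$ — and minimised when the sequence is as balanced as possible — two leaves and $N-2$ internal vertices of degree $2$, i.e.\ the path, giving $\Phi(P_N)=\frac{N+2}{2}$. Each of these degree sequences is realised by a \emph{unique} tree, so for $N\ge 3$ both extrema are strict and separated from the values attained by any other tree.

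Finally I would combine the two ingredients. Because there are only finitely many trees on $N$ nodes, the $O(\tau^{-2})$ remainders in the expansion of $S$ are uniformly bounded, and the strict $\Theta(1/\tau)$ gaps between $\Phi(K_{1,N-1})$, the second-largest value, $\ldots$, the second-smallest value and $\Phi(P_N)$ force a threshold $\tau^\star(N)$ beyond which the ordering of $S$ is governed entirely by its leading term. For $\tau\ge\tau^\star(N)$ the star minimises and the path maximises $S$, which through the reduction above gives $J(K_{1,N-1})\le J\le J(P_N)$ for every equilibrium, the lower bound coinciding with the optimal social cost. I expect the main obstacle to be precisely this uniform control of the remainder: justifying that the per-node expansion $\varepsilon_i=1/(\tau d_i)+O(\tau^{-2})$ holds with a constant that can be taken uniform over all trees (equivalently, that the coupling term $\tau\sum_j a_{ij}\varepsilon_j$ never spoils the leading order), so that a single threshold $\tau^\star(N)$ serves both the minimisation and the maximisation simultaneously.
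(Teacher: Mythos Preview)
Your proposal is correct and takes a genuinely different route from the paper's own proof. The paper argues via a \emph{local rewiring} process: starting from any tree, it repeatedly detaches a leaf $k$ from a node $i$ with $d_i\ge 3$ and re-attaches $k$ to some other leaf $j$; each such move increases $J$ (this monotonicity is imported as Lemma~2 from~\cite{CDC2015_GameFormationVirusSpread}), and the process terminates only at $P_N$. The reverse process decreases $J$ and terminates at $K_{1,N-1}$. Your argument instead expands $S(G)=\sum_i v_{i\infty}$ in $1/\tau$ to obtain $S(G)=N-\tfrac{1}{\tau}\Phi(G)+O(\tau^{-2})$ with $\Phi(G)=\sum_i 1/d_i$, and then optimises $\Phi$ directly over tree degree sequences by Schur-convexity of $x\mapsto 1/x$ under the constraint $\sum_i d_i=2(N-1)$, identifying the star and the path as the unique extremals. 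Your approach is more self-contained (it does not rely on the external monotonicity lemma), yields an explicit leading-order formula for $J$ on trees, and makes transparent \emph{why} the extremals are the star and the path: they are exactly the trees whose degree sequences majorise, respectively are majorised by, every other tree degree sequence. The paper's approach, on the other hand, produces an explicit monotone chain of trees interpolating between any given tree and $P_N$ (or $K_{1,N-1}$), which is structurally informative even if not exploited further here. The concern you flag about uniform control of the $O(\tau^{-2})$ remainder is real but benign: it is handled exactly by your finiteness argument, since for fixed $N$ there are only finitely many trees and hence a single threshold $\tau^\star(N)$ suffices.
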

\begin{proof}
According to Theorem~\ref{theorem:NashEqulibriumMustBeTrees}, in a Nash Equilibrium the graph is a tree, hence it has $N-1$ links. In a general case, from a tree in which there are two nodes $i$ and $k$, connected to one another, for which $d_i \ge 3$ and $ d_k =1$ (i.e. $k$ is a leaf), by breaking the connection between $i$ and $k$ and connecting $k$ to another leaf $j$ instead, we have: the degree of $k$ is $1$ (remains the same);  the degree of node $i$ becomes $d_i-1 \ge 2$ (decreased by one); and the degree of $j$ is $2$ (increased by one). The process can be repeated until there exists a node of degree at least $3$ in the tree. At the end, we end up with a tree with no degree bigger than $2$ and this is a path $P_N$. The social cost $J$ is increased in each step~\cite[Lemma 2]{CDC2015_GameFormationVirusSpread}. In this way, the process converges to a path $P_N$.

In a very similar (but reverse) process, starting from any tree $G$, we can decrease $J$ at each step, ending up with a star $K_{1,N-1}$ with a maximum $J(G)$ in the final step.
\end{proof}

However, what would be the optimal social cost, and the worst- and best-case Nash Equilibria highly depends on the effective infection rate $\tau$.

\begin{theorem}\label{upperAndLowerBound_J_OnTreesSmallTau}
For low values of the effective infection rate $\tau$, above but sufficiently close to the epidemic threshold $\tau_c$, the optimal social cost and the \emph{best-case} Nash Equilibrium are achieved by the \emph{path graph} $P_N$, while the \emph{worst-case} Nash Equilibrium is achieved by the \emph{star graph} $K_{1,N-1}$,
\begin{align*}
 J(P_N) \le J \le J(K_{1,N-1}). 
\end{align*}
\end{theorem}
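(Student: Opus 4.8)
The plan is to reduce the social-cost ordering to an ordering of total metastable infection, and then to exploit the spectral extremality of the path and the star among trees. By Theorem~\ref{theorem:NashEqulibriumMustBeTrees} every Nash Equilibrium is a tree, and every tree on $N$ nodes has exactly $L=N-1$ links, so $J=\alpha(N-1)+S(G)$ with $S(G):=\sum_{i=1}^{N}v_{i\infty}(G)$, where the term $\alpha(N-1)$ is identical for all trees. Hence it suffices to show $S(P_N)\le S(G)\le S(K_{1,N-1})$ for every tree $G$ in the stated regime. I would invoke the NIMFA identity $\tau_c(G)=1/\lambda_1(G)$, with $\lambda_1(G)$ the largest adjacency eigenvalue, together with the classical fact that among all trees on $N$ nodes the path $P_N$ uniquely minimizes and the star $K_{1,N-1}$ uniquely maximizes $\lambda_1$ (indeed $\lambda_1(P_N)=2\cos\frac{\pi}{N+1}$ and $\lambda_1(K_{1,N-1})=\sqrt{N-1}$). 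Equivalently, $P_N$ has the \emph{largest} and $K_{1,N-1}$ the \emph{smallest} epidemic threshold, and I read ``$\tau$ just above $\tau_c$'' as $\tau$ just above $\tau_c(P_N)$, so that every tree is supercritical.

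For the minimum I would argue by continuity. The metastable infection $S(G)$ is continuous in $\tau$, vanishes for $\tau\le\tau_c(G)$, and is strictly positive for $\tau>\tau_c(G)$. Since the path uniquely minimizes $\lambda_1$, we have $\tau_c(G)<\tau_c(P_N)$ strictly for every tree $G\neq P_N$, so $S(G)$ evaluated at $\tau^\star:=\tau_c(P_N)$ is a fixed positive constant, whereas $S(P_N)\to 0$ as $\tau\downarrow\tau_c(P_N)$. Because there are only finitely many trees on $N$ nodes, there is an $\eta>0$ such that $S(P_N)<S(G)$ for all $G\neq P_N$ on $(\tau_c(P_N),\tau_c(P_N)+\eta)$. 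Thus the path attains the strictly smallest social cost among all trees; being itself a Nash Equilibrium (Observation~\ref{observ:WhichTreesNE}), it is simultaneously the optimal topology and the best-case Nash Equilibrium.

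For the maximum I would mirror, in reverse, the transformation chain of Theorem~\ref{upperAndLowerBound_J_OnTrees}: repeatedly relocating a leaf onto a growing hub links any tree to $K_{1,N-1}$ and strictly increases $\lambda_1$ at every step. The point is that the dependence on $\tau$ \emph{reverses} near criticality. Expanding the steady state around $\tau_c$ gives, to leading order, $S(G)\approx(\tau\lambda_1-1)\,\frac{\left(\sum_i x_i^2\right)\left(\sum_i x_i\right)}{\sum_i x_i^3}$, with $x$ the positive principal eigenvector, so a larger $\lambda_1$ means a lower threshold and more infection just above it -- the opposite of the high-$\tau$ behaviour exploited in Theorem~\ref{upperAndLowerBound_J_OnTrees}. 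Passing to the limit $\tau\downarrow\tau_c(P_N)$ and using continuity and finiteness once more, the claim $S(K_{1,N-1})\ge S(G)$ on a right-neighborhood of $\tau_c(P_N)$ reduces to the single inequality $S(K_{1,N-1})\ge S(G)$ evaluated at the fixed value $\tau^\star=\tau_c(P_N)$.

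I expect this fixed-$\tau^\star$ comparison to be the main obstacle. At $\tau^\star$ the non-path trees are strictly supercritical but \emph{not} near their own thresholds, so the clean near-threshold expansion does not apply to them, and Lemma~\ref{lemma:IncreasingProbabilities} (which only relates $G$ to $G+l$) cannot compare two different spanning trees. I would instead prove directly, from the canonical continued-fraction form~(\ref{canonicalExpression}), that each leaf relocation toward the hub raises every $v_{i\infty}$ at $\tau^\star$, because it places higher-degree nodes in fewer hops inside each node's continued fraction -- the same degree-concentration mechanism invoked in the rewiring argument of Observation~\ref{observ:WhichTreesNE}. Establishing that this mechanism is genuinely monotone in $S$ at the moderate value $\tau^\star$ (rather than only asymptotically), and that $\tau^\star$ still lies below the high-$\tau$ crossover of Theorem~\ref{upperAndLowerBound_J_OnTrees}, is the crux of the argument.
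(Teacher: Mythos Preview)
Your reduction of the social cost on trees to $S(G)=\sum_i v_{i\infty}(G)$ and your invocation of the Lov\'asz--Pelik\'an spectral extremality are exactly what the paper does. Your lower-bound argument (path is best) is also essentially the paper's: the paper takes $\tau$ just below $\tau_c(P_N)=1/(2\cos\frac{\pi}{N+1})$ so that $S(P_N)=0$ exactly while every other tree is already supercritical; your continuity variant with $\tau$ just above $\tau_c(P_N)$ reaches the same conclusion.

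The divergence is in the upper bound (star is worst), and here you make the problem much harder than the paper does. You fix $\tau$ just above $\tau_c(P_N)$, so that \emph{all} trees are supercritical, and then try to compare their positive infection levels via the continued-fraction form and a leaf-relocation monotonicity at the moderate value $\tau^\star=\tau_c(P_N)$. You correctly flag this as the crux, and it is genuinely delicate: neither Lemma~\ref{lemma:IncreasingProbabilities} nor the near-threshold expansion applies there. The paper sidesteps this entirely by choosing $\tau$ at the \emph{other} end of the low range: at $\tau=\tau_c(K_{1,N-1})+\varepsilon=1/\sqrt{N-1}+\varepsilon$, the star is the \emph{only} tree above its own threshold, so $S(K_{1,N-1})>0=S(T)$ for every other tree $T$, and the upper bound is immediate. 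In other words, the paper exploits that the star has the \emph{smallest} threshold just as directly as it exploits that the path has the largest; your reading of ``$\tau$ just above $\tau_c$'' as a single value near $\tau_c(P_N)$ forces you into a comparison the paper never attempts. The price of the paper's shortcut is that the two inequalities are established at different points of the ``low $\tau$'' range rather than at a single $\tau$, but the theorem statement tolerates that reading.
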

\begin{proof}
We consider a spectral approach~\cite{PAVanMieghem2} and denote $y(\tau) = \sum \limits_{i=1}^N v_{i\infty} (\tau)$ the infection probability of all nodes in the metastable state. The probabilities of a node in the graph being infected are non-zero and $y(\tau) > 0$  if $\tau >\tau_c = \frac{1}{\lambda_1}$, where $\lambda_1$ is the largest eigenvalue of the adjacency matrix in the graph~\cite{Ganesh2005}. For $ \tau < \frac{1}{\lambda_1}$, $y(\tau) = 0$.%,Chakrabarti2008 Omic09,

Lov\'{a}sz and Pelik\'{a}n~\cite{LovaszPelikan1973} ordered all the trees with $N$ nodes by the largest eigenvalues of the adjacency matrices. It turns out that, the path $P_N$ and star $K_{1,N-1}$ are the trees with the minimum $\lambda_1 (P_N)$ and maximum $\lambda_1 (K_{1,N-1})$ largest eigenvalues, respectively.

For values $\tau = \frac{1}{\lambda_1 (K_{1,N-1})} + \varepsilon = \frac{1}{\sqrt{N-1}} + \varepsilon$, it holds that $y_{K_{1,N-1}}(\tau) > y_{T}(\tau) = 0$, where $T$ is any tree different from $K_{1,N-1}$, therefore $J(K_{1,N-1}) $ is the largest.

For values $\tau = \frac{1}{\lambda_1 (P_N)} - \varepsilon = \frac{1}{2\cos(\frac{\pi}{N+1})} - \varepsilon$, we have $y_{T}(\tau) > y_{P_N} (\tau) = 0$, where $T$ is any tree different from $P_{N}$, hence $J(P_N) $ is the smallest.
\end{proof}

Theorems~\ref{upperAndLowerBound_J_OnTrees} and \ref{upperAndLowerBound_J_OnTreesSmallTau} show opposite behavior depending on whether the value $\tau$ is in the high or low regime, although both revolve around the path and star graphs. For $\tau$ in the intermediate regime, different trees may give the best-/worst-case Nash Equilibrium.

%%\begin{observation}\label{mediumValuesTau} 
%There are values of $\tau$ such that worst- and best-case Nash Equilibria are achieved for trees different from star $K_{1,N-1}$ and path $P_N$. For example, for $N=5$, there are three non-isomorphic graphs: $K_{1,4}$, $P_5$ and the tree $T'$ given in Fig.~\ref{fig:tree5}. For $\tau \in [1.475, 1.589]$, tree $T'$ is the best-case Nash Equilibrium and has optimal social cost.
%\end{observation}
%\begin{proof}

%\end{proof}
\begin{corollary}\label{corr:PoSPoA}
For both high and low effective infection rate $\tau$, $\text{PoS} = 1 \text{ and }\\ \text{PoA} = \max \big\{\frac{J(P_N)}{J(K_{1,N-1})},\frac{J(K_{1,N-1})}{J(P_N)} \big \}$.
\end{corollary}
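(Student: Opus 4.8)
The plan is to derive the corollary directly from Theorems~\ref{upperAndLowerBound_J_OnTrees} and~\ref{upperAndLowerBound_J_OnTreesSmallTau}, which already pin down the best- and worst-case Nash Equilibria together with the social optimum in each $\tau$-regime. The entire argument reduces to substituting these identified topologies into the definitions~(\ref{PoA_PoS_definitions}) of PoS and PoA and then observing how the two regimes collapse into a single formula.

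For the Price of Stability I would argue as follows. In the high-$\tau$ regime Theorem~\ref{upperAndLowerBound_J_OnTrees} states that both the best-case Nash Equilibrium \emph{and} the social optimum are attained by the star $K_{1,N-1}$; hence the numerator and denominator of PoS coincide and $\text{PoS} = J(K_{1,N-1})/J(K_{1,N-1}) = 1$. In the low-$\tau$ regime Theorem~\ref{upperAndLowerBound_J_OnTreesSmallTau} asserts the same coincidence, now with the path $P_N$ playing the role of the optimizer, so again $\text{PoS} = J(P_N)/J(P_N) = 1$. Thus $\text{PoS}=1$ in both regimes precisely because the best equilibrium always realizes the optimal social cost.

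For the Price of Anarchy I would plug the worst-case equilibrium and the optimum into~(\ref{PoA_PoS_definitions}) separately for each regime. For high $\tau$ the worst-case equilibrium is $P_N$ while the optimum is $K_{1,N-1}$, giving $\text{PoA}=J(P_N)/J(K_{1,N-1})$; for low $\tau$ the roles swap, yielding $\text{PoA}=J(K_{1,N-1})/J(P_N)$. These two expressions are reciprocals of one another, so at most one of them can exceed $1$.

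The one observation that ties the two cases into the single $\max$ formula---and the only step that needs a word of justification---is that the Price of Anarchy is always at least $1$, since the worst-equilibrium cost can never fall below the minimum social cost. Consequently, in whichever regime we sit, the applicable ratio is exactly the one that is $\ge 1$, i.e.\ the larger of the two reciprocal quantities; writing it as $\max\{J(P_N)/J(K_{1,N-1}),\,J(K_{1,N-1})/J(P_N)\}$ captures both regimes uniformly. I expect no genuine obstacle here: the substantive content already lives in the preceding two theorems, and the corollary is a bookkeeping consequence once the normalization $\text{PoA}\ge 1$ is invoked to select the correct branch of the maximum.
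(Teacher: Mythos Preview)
Your proposal is correct and follows essentially the same approach as the paper: both derive the corollary by substituting the topologies identified in Theorems~\ref{upperAndLowerBound_J_OnTrees} and~\ref{upperAndLowerBound_J_OnTreesSmallTau} into the definitions~(\ref{PoA_PoS_definitions}). Your added remark that $\text{PoA}\ge 1$ selects the correct branch of the maximum is a nice clarification that the paper leaves implicit.
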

\begin{proof}
Based on Theorems~\ref{upperAndLowerBound_J_OnTrees} and \ref{upperAndLowerBound_J_OnTreesSmallTau}, for high (low) $\tau$, tree $K_{1,N-1}$ ($P_N$) is both optimal in social cost and the best-case Nash Equilibrium, while $P_N$ ($K_{1,N-1}$) is the worst-case Nash Equilibrium. Based on the definitions for PoS and PoA in (\ref{PoA_PoS_definitions}), $\text{PoS} = \frac{J(K_{1,N-1})}{J(K_{1,N-1})} = 1 (= \frac{J(P_N)}{J(P_N)})$; and PoA$=\frac{J(P_N)}{J(K_{1,N-1})}$ for large enough $\tau$ and PoA$=\frac{J(K_{1,N-1})}{J(P_N)}$ for $\tau$ close to the epidemic threshold $\tau_c$.
\end{proof}

\begin{corollary}\label{corr:PoSPoA2}
For sufficiently high effective infection rate $\tau$, in the virus spread-cost game formation,
\small{
\begin{align*}
\text{PoA} < 1+ \frac{1}{2\big (\tau(\alpha+1)-1 \big)}, \text{ where $\tau(\alpha+1)>1$.}
\end{align*}
}%\normalsize where $\tau(\alpha+1)>1$.
\end{corollary}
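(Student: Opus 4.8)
The plan is to reduce the claim to a single scalar inequality and then estimate the infection sums on the star and the path. By Corollary~\ref{corr:PoSPoA}, for sufficiently high $\tau$ the optimum and the best-case Nash Equilibrium coincide with the star $K_{1,N-1}$ while the worst-case Nash Equilibrium is the path $P_N$, so $\text{PoA}=J(P_N)/J(K_{1,N-1})$. Since every Nash Equilibrium is a tree with exactly $N-1$ links (Theorem~\ref{theorem:NashEqulibriumMustBeTrees}), in the $\gamma\rightarrow 0$ regime each tree $G$ satisfies $J(G)=\alpha(N-1)+y_G$ with $y_G:=\sum_{i=1}^N v_{i\infty}(G)$. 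I would therefore write $\text{PoA}=1+\frac{y_{P_N}-y_{K_{1,N-1}}}{\alpha(N-1)+y_{K_{1,N-1}}}$ and bound this fraction by lower-bounding its denominator and upper-bounding its numerator separately.

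First I would establish a universal per-node bound $v_{i\infty}\ge 1-\frac{1}{\tau}$, valid above threshold in any connected graph: evaluating \eqref{expression_steadyState} at the node attaining $m:=\min_i v_{i\infty}$ and using that it has at least one neighbor with $v_{j\infty}\ge m$ gives $m\ge 1-\frac{1}{1+\tau m}$, which rearranges to $m(\tau-1-\tau m)\le 0$ and hence $m\ge 1-\frac{1}{\tau}$. This yields $\alpha(N-1)+y_{K_{1,N-1}}\ge \alpha(N-1)+N(1-\frac{1}{\tau})\ge (N-1)(\alpha+1-\frac{1}{\tau})=\frac{N-1}{\tau}\big(\tau(\alpha+1)-1\big)$. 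Substituting this denominator bound collapses the corollary to the single clean inequality $y_{P_N}-y_{K_{1,N-1}}<\frac{N-1}{2\tau}$, which is what the remainder of the argument must establish.

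For the numerator I would estimate the two sums with different tools. For the path, the elementary degree bound $v_{i\infty}\le 1-\frac{1}{1+\tau d_i}$ (degrees $1$ at the two ends and $2$ elsewhere) gives $y_{P_N}\le N-\frac{2}{1+\tau}-\frac{N-2}{1+2\tau}$. For the star I would solve its two-variable fixed point $1-v_l=\frac{1}{1+\tau v_c}$, $1-v_c=\frac{1}{1+\tau(N-1)v_l}$ by bootstrapping: the universal bound makes $\tau(N-1)v_l$ large, forcing $v_c\ge 1-\varepsilon_0$ with $\varepsilon_0=\frac{1}{1+(N-1)(\tau-1)}$, and feeding this back produces the sharper leaf estimate $1-v_l\le\frac{1}{1+\tau(1-\varepsilon_0)}$, so $y_{K_{1,N-1}}\ge N-\varepsilon_0-\frac{N-1}{1+\tau-\tau\varepsilon_0}$. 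Subtracting, the $O(\varepsilon_0)$ corrections become negligible for high $\tau$ (indeed $\varepsilon_0\rightarrow 0$ as $\tau\rightarrow\infty$), and the residual inequality reduces to $\frac{\tau(N-4)-1}{(1+\tau)(1+2\tau)}<\frac{N-1}{2\tau}$; cross-multiplying turns this into $6\tau^2+3\tau(N-1)+2\tau+(N-1)>0$, which always holds.

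The delicate point, and where I expect the real work to lie, is that this estimate is genuinely tight: both $\text{PoA}-1$ and the target $\frac{1}{2(\tau(\alpha+1)-1)}$ behave like $\frac{1}{2\tau(\alpha+1)}$ to leading order, so the inequality only survives on the sub-leading terms. In particular the sharper leaf estimate $1-v_l\approx\frac{1}{1+\tau}$ (which requires first showing $v_c\approx 1$) is indispensable: replacing it by the cruder universal bound $1-v_l\le\frac{1}{\tau}$ discards a term of order $\frac{N}{\tau^2}$ and would spuriously demand $\tau=\Omega(N)$. Equivalently, one is really exploiting that the interior path nodes relax to the bulk value $1-v_{i\infty}\rightarrow\frac{1}{2\tau}$ rather than to the weaker degree estimate $\frac{1}{1+2\tau}$. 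Keeping these corrections with the correct sign is what both pins down the constant $\frac{1}{2}$ in the statement and dictates how large ``sufficiently high $\tau$'' must be chosen (jointly with $N$).
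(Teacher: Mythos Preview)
The paper does not actually prove this statement; it simply defers to the earlier conference version~\cite{CDC2015_GameFormationVirusSpread}. So there is no in-paper argument to compare against, and your outline has to be judged on its own.

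Your strategy is sound and the computations check out. The reduction to the single inequality $y_{P_N}-y_{K_{1,N-1}}<\frac{N-1}{2\tau}$ via the per-node lower bound $v_{i\infty}\ge 1-\tfrac{1}{\tau}$ is clean, and the final algebraic inequality $6\tau^{2}+3\tau(N-1)+2\tau+(N-1)>0$ indeed holds unconditionally once the $\varepsilon_{0}$-corrections are dropped. The one place that genuinely needs more care is precisely that omission: you wave away the two correction terms $\varepsilon_{0}$ and $\frac{N-1}{1+\tau-\tau\varepsilon_{0}}-\frac{N-1}{1+\tau}$ as ``negligible for high $\tau$'' without checking that they fit inside the available slack. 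A short estimate (using $\varepsilon_{0}\sim\frac{1}{(N-1)\tau}$) shows they contribute at most $\frac{1}{(N-1)\tau}+O(\tau^{-2})$, whereas your algebraic margin is of order $\tfrac{3}{2\tau}$; so the inequality survives once $\tau$ exceeds an absolute constant. You should make this explicit rather than leaving the reader to wonder whether ``sufficiently high'' secretly scales with $N$.

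Two minor remarks. First, your closing sentence says the threshold on $\tau$ must be chosen ``jointly with $N$'', but as just noted your own bounds give an $N$-free threshold for this scalar inequality; any $N$-dependence enters only upstream, through the invocation of Theorem~\ref{upperAndLowerBound_J_OnTrees} via Corollary~\ref{corr:PoSPoA}. Second, the aside about interior path nodes ``relaxing to the bulk value $\tfrac{1}{2\tau}$ rather than $\tfrac{1}{1+2\tau}$'' is slightly muddled: for the path you actually \emph{use} the degree bound $1-v_{i\infty}\ge\tfrac{1}{1+2\tau}$, not the bulk value; the bootstrapping step that is truly indispensable is the star-leaf estimate, which you already identified correctly in the preceding sentence.
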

\normalsize
\begin{proof}  The proof is provided in~\cite{CDC2015_GameFormationVirusSpread}.
\end{proof}

The exact value of the PoA is given in Fig.~\ref{fig:VSCN10} by making use of Corollary~\ref{corr:PoSPoA}. It is highest ($\sim3.3$) for small $\tau$, above the epidemic threshold and it further sharply decreases reaching $1$ for a unique Nash Equilibrium. For higher $\tau$, the PoA increases towards its maximum around $1.1$ and then it slowly decreases approaching $1$. 
\begin{figure}[h!tb]
\centering
\subfloat[$N=10$.]{\hspace{-0.35cm}
\includegraphics[trim = 12.8mm 68mm 22.8mm 68mm,clip,width=0.34\textwidth]{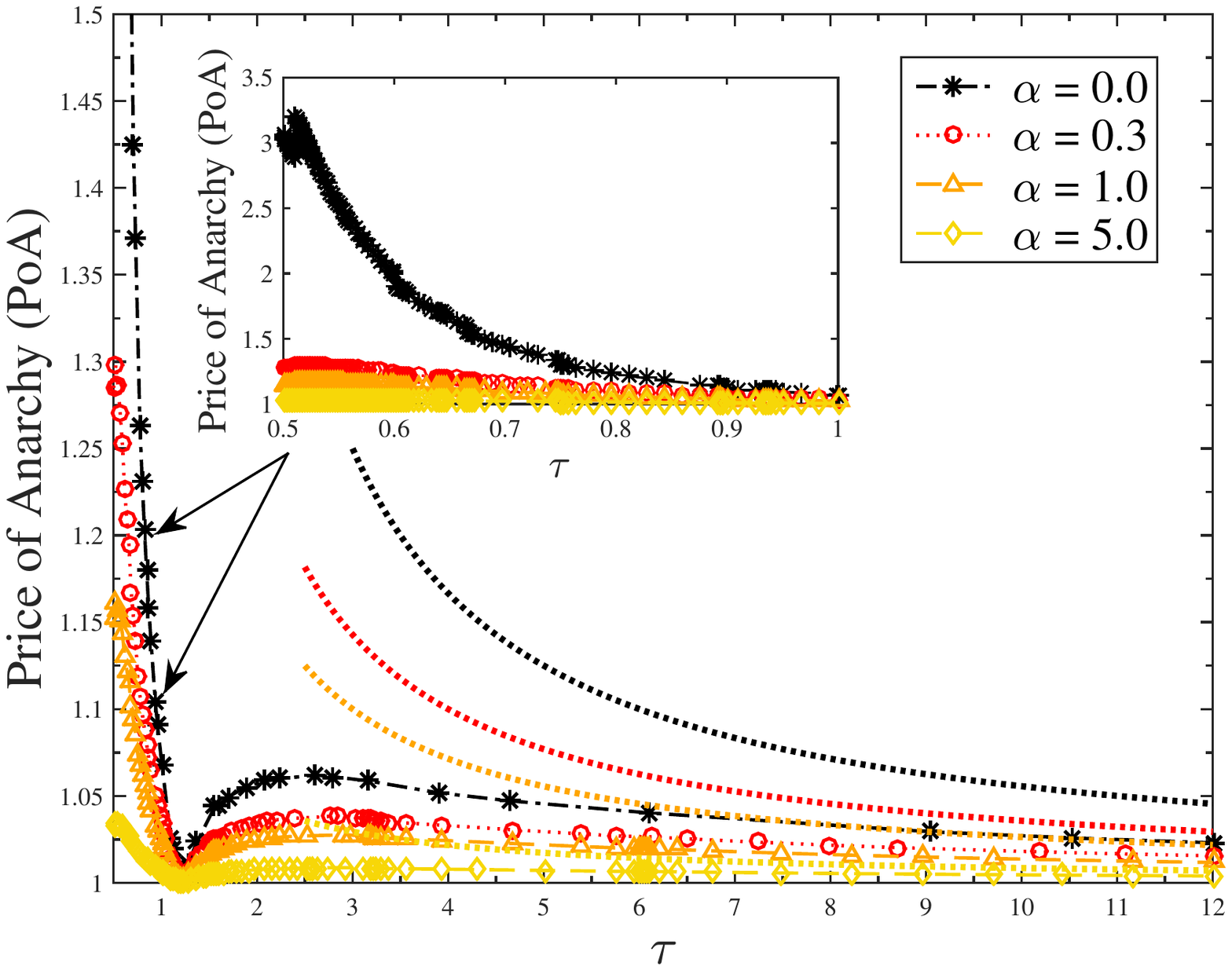}
\label{VSCN10}}\\
\subfloat[$N=1000$.]{
\includegraphics[trim = 12.8mm 68mm 22.8mm 68mm,clip,width=0.34\textwidth]{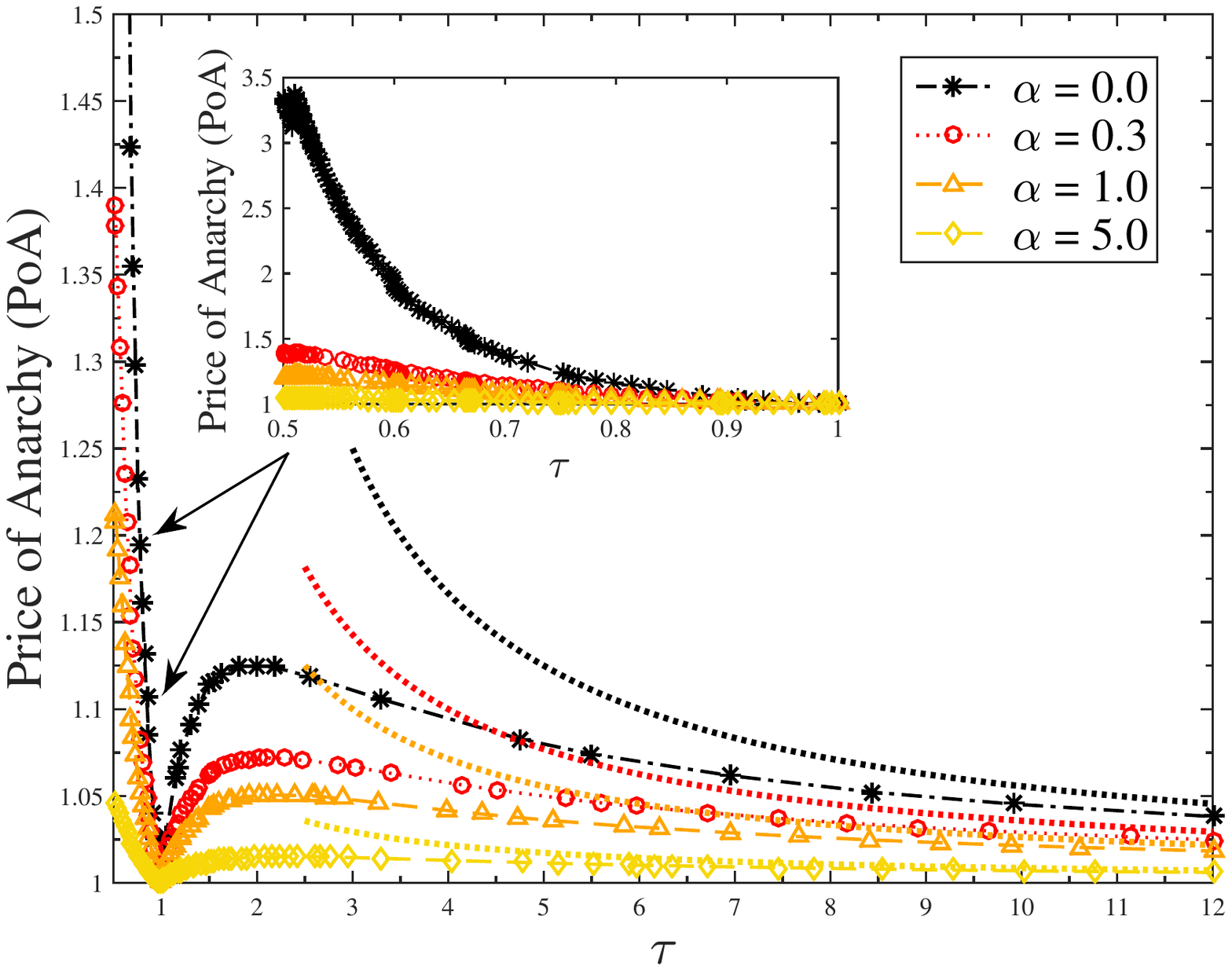}
\label{VSCN1000}}
\caption[]{The Price of Anarchy (PoA). The dotted lines represent the bound from Corollary~\ref{corr:PoSPoA2}.} \label{fig:VSCN10}
\end{figure}

%We proceed this section by considering the pairwise Nash Equilibrium and pairwise stability.
%\begin{theorem}\label{theor:PairwiseStability}
%In the \textsc{VSC} game, a Nash Equilibrium implies \emph{pairwise stability} and a Nash Equilibrium is equivalent to a \emph{pairwise Nash Equilibrium}.
%\end{theorem}
%\begin{proof}
%If $G$ is a Nash Equilibrium then $G$ is a \emph{tree} and removing a link $l=(i,j)$ by any node $i$ would disconnect the graph, hence $J_i (G-(i,j)) = \infty$ and $J_i (G)<J_i (G-(i,j))$. According to Lemma~\ref{lemma:IncreasingProbabilities} adding a link increases $v_{i\infty}$ and the number of installed links, hence $G$ is pairwise stable and a Nash Equilibrium. Trivially, a \emph{pairwise Nash Equilibrium} implies that $G$ is also a Nash Equilibrium. However, \emph{pairwise stability} does not necessarily imply a (pairwise) Nash Equilibrium. Examples are \emph{trees} that are not Nash Equilibria.
%\end{proof}
%
%Having studied the \textsc{VSC} game, we have observed that the tree topology in which a virus thrives is not always a star (i.e., the tree with the smallest diameter), but that it may differ with the virus infection rate.
%
We have observed that the equilibria tree topology in which a virus thrives is not always a star (i.e., the tree with the smallest diameter), but that it may differ with the virus infection rate. For most of the $\tau$ values (except maybe small $\tau$), a small value for the Price of Anarchy (PoA) means that a topology close to optimal can be obtained in a decentralized manner, even when the individual players play selfishly.

\subsection{Optimal social cost, Nash Equilibria and the PoA for $\gamma > 0$}

%\subsubsection{Optimal social cost}
We start by analyzing the social cost (\ref{social_optimumPerformance}). %, where $J  = N+ \alpha L + \gamma \sum_{i=1}^N \sum_{j=1}^{N} h(i,j) -\frac{1}{\tau} \sum_{i=1}^N \frac{1}{d_i} $. % for high $\tau$
%\small{
%\begin{align*}
%J & = N+ \alpha L + \gamma \sum_{i=1}^N \sum_{j=1}^{N} h(i,j) -\frac{1}{\tau} \sum_{i=1}^N \frac{1}{d_i} 
%\end{align*}
%}\normalsize 
 Node $i$ is one hop away from its $d_i$ neighboring nodes, while it is at least $2$ hops away from the other $N-1-d_i$ nodes, hence $\sum_{j=1}^{N} h(i,j) \ge d_i + 2(N-1-d_i)$. Using this, for large enough $\tau$ when $\sum_{i=1}^N v_{i\infty}$ can be approximated\footnote{In fact, the sum can be lower bounded~\cite[p. 10]{Omic09} by $\sum_{i=1}^N v_{i\infty} > N - \sum_{i=1}^N \frac{1}{1+(\tau - 1) d_i}$, which is meaningful for $\tau>1$.} by using truncation of Maclaurin seria~\cite[Lemma 1]{VanMieghem2011}, the social cost in (\ref{social_optimumPerformance}) is lower bounded as%. We now have
\small{
\begin{align}
J & \ge N+ 2\gamma N (N-1)+ (\alpha-2 \gamma) L  -\frac{1}{\tau} \sum_{i=1}^N \frac{1}{d_i}. \label{socialCostWiener}
\end{align}
}\normalsize The following bound is due to Cioab\u{a}~\cite[Theorem 9]{Cioaba20061959},
\small{
\begin{align}
\sum_{i=1}^N \frac{1}{d_i} &\le \frac{N^2}{2L} + (\frac{1}{d_{\min}}- \frac{1}{d_{\max}})(N-1-\frac{2L}{N}), \notag%\\
\end{align}
}\normalsize where the equality holds for regular graphs and the star graph. Based on this, $d_{\min} \ge1$, and $d_{\max} \le N-1$, we obtain%Hence,
\small{
\begin{align}
\sum_{i=1}^N \frac{1}{d_i} & \le \frac{N^2}{2L} + (1- \frac{1}{N-1})(N-1-\frac{2L}{N}) \notag \\
&= \frac{N^2}{2L} + \frac{N-2}{N(N-1)}(N(N-1)-2L). \label{boundInverse degree}
\end{align}
}\normalsize Equality in (\ref{boundInverse degree}) is achieved only for the star $K_{1,N-1}$, where $d_{\max} = N-1$ and $d_{\min} = 1$, or for the complete graph $K_{N}$ (where $2L = N(N-1)$). (The equality for other regular graphs is ruled out because of the inequality in (\ref{boundInverse degree}).) Using (\ref{boundInverse degree}) into (\ref{socialCostWiener}) yields
\footnotesize{
\begin{align}
&J \ge  N+ 2\gamma N (N-1) - \frac{N-2}{\tau} +(\alpha-2 \gamma + \frac{2(N-2)}{\tau N (N-1)}) L -\frac{N^2}{2\tau L} \label{socialCostWiener1}
\end{align}
}\normalsize Let us consider two regimes:

\begin{enumerate}
\item If $\alpha-2 \gamma + \frac{2(N-2)}{\tau N (N-1)} \ge 0$, then the bound in (\ref{socialCostWiener1}) is an increasing function in $L$, hence the optimal social cost is achieved for $L = N-1$. The bound in (\ref{socialCostWiener1}) is tight for such $L$, because the bounds in (\ref{boundInverse degree}) and (\ref{socialCostWiener}) become equalities for $K_{1,N-1}$ and any graph with a diameter at most two, respectively. Hence, $J \ge J (K_{1,N-1})$ and equality is achieved only for the star graph $K_{1,N-1}$.

\item If $\alpha-2 \gamma + \frac{2(N-2)}{\tau N (N-1)} < 0$, then the bound in (\ref{socialCostWiener1}) increases for $L< \frac{N}{\sqrt{2\tau(2\gamma - \alpha) - \frac{4(N-2)}{N (N-1)})}}$ and decreases for $L> \frac{N}{\sqrt{2\tau(2\gamma - \alpha) - \frac{4(N-2)}{N (N-1)})}}$. Hence, the optimal social cost is achieved in one of two boundary cases: $L = N-1$ and $L = \binom{N}{2}$. For $L = N-1$, similarly as in 1), we obtain that the only possibility is the star graph $K_{1,N-1}$, while for $L = \binom{N}{2}$ it is the complete graph $K_{N}$. Finally, $J \ge \min \{J(K_{1,N-1}), J (K_N)\}$.% in this case.

It remains to compare $J(K_{1,N-1})$ and $J(K_N)$: $J(K_{1,N-1}) = N+ \alpha(N-1) + 2 \gamma (N-1)^2 - \frac{(N-1)^2+1}{\tau(N-1)}$ and $J(K_{N}) = N+ \alpha \frac{N(N-1)}{2} + \gamma N(N-1) - \frac{N}{\tau(N-1)}$. Hence,
\small{
\begin{align*}
J(K_{N}) - J(K_{1,N-1}) = (N-1)(N-2) (\frac{\alpha}{2} - \gamma + \frac{1}{\tau (N-1)}).
\end{align*}
}\normalsize If $\alpha \le 2 \gamma - \frac{2}{\tau (N-1)}$, then $J(K_{N}) \le J(K_{1,N-1})$ and the optimal social cost is achieved for the complete graph $K_N$. If $\alpha \ge 2 \gamma - \frac{2}{\tau (N-1)}$, then $J(K_{N}) \ge J(K_{1,N-1})$ and the optimal social cost is achieved for the star graph $K_{1,N-1}$. The last also covers case 1), because $2 \gamma - \frac{2}{\tau (N-1)} < 2 \gamma - \frac{2(N-2)}{\tau N (N-1)}$.  
\end{enumerate}
Now, for the optimal social cost, Theorem~\ref{socialCostPerformranceGame} follows.

\begin{theorem}\label{socialCostPerformranceGame} For sufficiently high $\tau$, the optimal social cost is achieved for the star $K_{1,N-1}$ if $\alpha \ge 2 \gamma - \frac{2}{\tau (N-1)}$, and for the complete graph $K_N$, otherwise.
\end{theorem}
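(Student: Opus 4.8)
The plan is to bound the social cost $J$ from below by a quantity that depends on the topology only through the link count $L$, minimize that bound over the admissible range $N-1 \le L \le \binom{N}{2}$, and then exhibit graphs that attain it with equality. Two ingredients feed the bound. First, since every node reaches its $d_i$ neighbours in one hop and each of the remaining $N-1-d_i$ nodes in at least two hops, summing $\sum_j h(i,j) \ge 2(N-1)-d_i$ over $i$ and using $\sum_i d_i = 2L$ gives $\sum_{i,j} h(i,j) \ge 2N(N-1)-2L$, with equality exactly when the diameter is at most two. Second, for sufficiently high $\tau$ the Maclaurin truncation of \cite[Lemma 1]{VanMieghem2011} lets me replace $\sum_i v_{i\infty}$ by $N-\tfrac1\tau\sum_i \tfrac1{d_i}$, and Cioab\u{a}'s inequality \cite[Theorem 9]{Cioaba20061959}, together with $d_{\min}\ge 1$ and $d_{\max}\le N-1$, bounds $\sum_i 1/d_i$ as in (\ref{boundInverse degree}). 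Substituting both ingredients into (\ref{social_optimumPerformance}) gives (\ref{socialCostWiener}); inserting (\ref{boundInverse degree}) then produces the one-variable lower bound (\ref{socialCostWiener1}), of the form $C + c_1 L - \tfrac{N^2}{2\tau L}$ with $c_1 = \alpha-2\gamma+\tfrac{2(N-2)}{\tau N(N-1)}$.

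Next I minimize in $L$. The term $-\tfrac{N^2}{2\tau L}$ is increasing in $L$, so if $c_1 \ge 0$ the entire bound is increasing and its minimum sits at $L=N-1$. If $c_1 < 0$ the bound is \emph{concave} in $L$ (its second derivative $-\tfrac{N^2}{\tau L^3}$ is negative), hence on $[\,N-1,\binom{N}{2}\,]$ it is minimized at one of the two endpoints, i.e. $L=N-1$ or $L=\binom{N}{2}$. At $L=N-1$ the only graph making both the hopcount bound and (\ref{boundInverse degree}) tight is the star $K_{1,N-1}$: it has diameter two and attains Cioab\u{a} equality, whereas no other tree does. At $L=\binom{N}{2}$ the unique candidate is the complete graph $K_N$ (diameter one, regular, so again tight). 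Thus every minimizer of the lower bound is either $K_{1,N-1}$ or $K_N$, and since the bound is attained at each, it suffices to compare these two graphs directly.

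It remains to compare the closed forms $J(K_{1,N-1})$ and $J(K_N)$. Subtracting, the leading terms cancel and the difference factors as $(N-1)(N-2)\bigl(\tfrac\alpha2-\gamma+\tfrac1{\tau(N-1)}\bigr)$; this is non-positive — so $K_N$ is optimal — exactly when $\alpha \le 2\gamma-\tfrac{2}{\tau(N-1)}$, and non-negative otherwise, which is the claimed dichotomy. The regime $c_1\ge 0$ is subsumed because $2\gamma-\tfrac2{\tau(N-1)} < 2\gamma-\tfrac{2(N-2)}{\tau N(N-1)}$, so it always falls into the ``star'' branch. I expect the main obstacle to be twofold: guaranteeing that the lower bound is \emph{simultaneously} tight in both the hopcount and the Cioab\u{a} inequalities at the purported optimizers — the concavity step is precisely what legitimately collapses the search over all graphs to the two extreme link counts — and controlling the error in the $\sum_i v_{i\infty}$ approximation so that the ``sufficiently high $\tau$'' hypothesis renders the inequalities strict enough that no intermediate graph can undercut $K_{1,N-1}$ or $K_N$.
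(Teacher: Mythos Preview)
Your proposal is correct and follows essentially the same route as the paper: the same hopcount lower bound, the same Maclaurin truncation for $\sum_i v_{i\infty}$, the same application of Cioab\u{a}'s inequality to reach (\ref{socialCostWiener1}), the same endpoint comparison of $J(K_{1,N-1})$ and $J(K_N)$, and the same observation that the $c_1\ge 0$ regime is absorbed into the ``star'' branch. The only cosmetic difference is that where the paper locates the critical point of (\ref{socialCostWiener1}) and argues ``increasing then decreasing,'' you invoke concavity directly to force the minimum onto an endpoint; the two arguments are equivalent.
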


%\subsubsection{Nash Equilibrium and the Price of Anarchy}
We proceed with characterization of the Nash Equilibria and the Price of Anarchy for sufficiently high $\tau$. In the \textsc{VSPC} game, Nash Equilibria topologies can be complex, while the star and the complete graph can appear as extreme cases:
\begin{itemize}
\item The complete graph $K_N$ is a Nash Equilibrium, if and only if $\alpha \le \gamma - \frac{1}{\tau (N-1)}$. Since new links cannot be added, changing the strategy for a node $i$ means deleting $k$ of its links ($1\le k \le N-2$). The corresponding change would increase the cost $J_i$ of $i$ by $k(\gamma - \alpha) - \frac{1}{\tau (N-1-k)} + \frac{1}{\tau (N-1)} = k(\gamma - \alpha)  - \frac{k}{\tau(N-1)(N-1-k)} \ge \frac{k}{\tau (N-1)} (1- \frac{1}{N-1-k}) \ge 0$. Hence, node $i$ has no interest to deviate from its current strategy. On the other hand,  if $\alpha > \gamma - \frac{1}{\tau (N-1)}$ and node $i$ changes its strategy by cutting $(N-2)$ links (all except one - to keep its connectivity), the change in $J_i$ is equal to $(N-2)(\gamma - \alpha - \frac{1}{\tau(N-1)}) <0$, which will reduce its cost.

\item The star graph $K_{1,N-1}$ is a Nash Equilibrium, if and only if $\alpha \ge \gamma - \frac{1}{\tau (N-1)}$. The root node cannot delete a link, because this would make its cost infinity. If $i$ is a leaf, for some $k \ge 0$, changing its strategy means: (i) adding $k$ links, then the hopcounts to these nodes are reduced from $2$ to $1$, hence the contribution from the hopcounts is changed by $-k\gamma$; or (ii) deleting the link installed by him (if any) and installing $(k+1)$ links, where $k+1<N-2$. In (ii), the hopcount to the root node is increased from $1$ to $2$, the hopcount to $(k+1)$ links is decreased from 2 to 1, and the hopcounts to the other $\big( (N-2) - (k+1)\big)$ nodes are increased from $2$ to $3$. The change in the sum of hopcounts is: $-(k+1) \gamma + 1\cdot\gamma + \big( (N-2) - (k+1)\big) \gamma = -k\gamma + \big( (N-2) - (k+1)\big)  \gamma \ge -k\gamma$, hence the change of the hopcount is again at least $-k\gamma$. Thus, the change in $J_i$ is at least $k \alpha - k \gamma - \frac{1}{\tau (k+1)} + \frac{1}{\tau} = k(\alpha -\gamma + \frac{1}{\tau (k+1)}) \ge k(\alpha -\gamma + \frac{1}{\tau (N-1)}) \ge 0$. On the other hand, if $\alpha < \gamma - \frac{1}{\tau (N-1)}$, the change in $J_i$ by adding $(N-2)$ links from one leaf to all the other leaves in $K_{1,N-1}$, is $(N-2)(\alpha- \gamma) - \frac{1}{\tau(N-1)} + \frac{1}{\tau} = (N-2)(\alpha - \gamma + \frac{1}{\tau (N-1)}) <0$, i.e. it is not a Nash Equilibrium.
\end{itemize}

The above two points resolves the conditions for two specific graphs, but they do not cover all the possibilities for the Nash Equilibria and the Price of Anarchy, which may vary on different intervals and a case analysis, as provided in the following, is required. We will consider the case $\alpha < 2\gamma - \frac{1}{\tau}$ and the case $\alpha > 2\gamma - \frac{1}{\tau}$.
\vspace{-1em}
\subsection*{Case $\alpha < 2\gamma - \frac{1}{\tau}$.} Now, $\gamma > \frac{1}{2\tau}$. A Nash Equilibrium is achieved only for graphs with a diameter at most $2$ - an argument used in the later points (b) and (c). The proof is by contradiction. Let us assume node $i$ is at least $3$ hops away from another node. Clearly, $d_i \le (N-2)$ and if $i$ installs a link from $i$ to $j$, the difference in $J_i$ is at least $\alpha - 2 \gamma +\frac{1}{\tau d_i (d_i +1)} \le - \frac{1}{\tau} + \frac{1}{\tau d_i (d_i +1)}  <0$. Hence, $i$ reduces its cost and the graph is not a Nash Equilibrium. We consider three sub-intervals (a), (b) and (c):%; $\alpha < \gamma - \frac{1}{2\tau}$; (b) $\gamma - \frac{1}{2\tau} \le \alpha \le \gamma - \frac{1}{\tau (N-1)}$; and (c) $\gamma - \frac{1}{\tau (N-1)} \le \alpha < 2\gamma - \frac{1}{\tau}$:% Hence, the diameter of the graph with a Nash Equilibrium is not bigger than $2$.

\hspace{-1em}\textbf{(a)} If $\alpha < \gamma - \frac{1}{2\tau}$, adding a link from $i$ will change $J_i$ by at least $\alpha - \gamma + \frac{1}{\tau d_i (d_i +1)} < - \frac{1}{2\tau} + \frac{1}{\tau d_i (d_i +1)} \le 0$. Therefore, the complete graph $K_N$ is the only Nash Equilibrium. Because, $\alpha < \gamma - \frac{1}{2\tau} \le 2\gamma - \frac{2}{\tau (N-1)}$ for $N \ge 3$ and, according to Theorem~\ref{socialCostPerformranceGame}, it also has optimal social cost. Finally, $\text{PoA} = \text{PoS} = 1$.

\hspace{-1em}\textbf{(b)} If $\gamma - \frac{1}{2\tau} \le \alpha \le \gamma - \frac{1}{\tau (N-1)}$ and we assume, by contradiction, that there is a Nash Equilibrium different from $K_N$, we have the following:

\begin{itemize}[noitemsep,nolistsep]
\item If there is a link in the graph, installed by node $i$ such that its deletion increases the sum of hopcounts from $i$ by only $1$, then $J_i$ is increased by: $\gamma - \alpha - \frac{1}{\tau d_i (d_i-1)}>0$. On the other hand, adding a link would change $J_i$ to: $ \alpha -\gamma + \frac{1}{\tau d_i (d_i+1)}>0$. The last two inequalities imply, $0 < \alpha -\gamma + \frac{1}{\tau d_i (d_i+1)} <- \frac{1}{\tau d_i (d_i-1)} + \frac{1}{\tau d_i (d_i+1)} = -\frac{2}{\tau (d_i-1) d_i (d_i+1)} < 0$, which is a contradiction. Hence, there is no other Nash Equilibrium different from $K_N$ and $\text{PoA} = \text{PoS} = 1$.

\item If deleting any of the links installed by $i$ would increase the sum of hopcounts by at least $2$; by link deletion, the difference in $J_i$ is at least $2 \gamma - \alpha - \frac{1}{\tau d_i (d_i-1)} $ and we have $2 \gamma - \alpha - \frac{1}{\tau d_i (d_i-1)} \ge \gamma + \frac{1}{\tau (N-1)} - \frac{1}{\tau d_i (d_i-1)} \ge \frac{1}{2\tau} -  \frac{1}{\tau d_i (d_i-1)} + \frac{1}{\tau (N-1)} \ge \frac{1}{\tau (N-1)} >0$. We proceed by considering the properties of the possible Nash Equilibria in particular sub-intervals:
$- \frac{1}{\tau (k-1) k} \le \alpha -\gamma < - \frac{1}{\tau k (k+1)}$ for $k \in \{2,3,\ldots,\lfloor \sqrt{N-\frac{3}{4}} - \frac{1}{2} \rfloor\} $. By link addition, the difference in $J_i$ is $\alpha- \gamma +\frac{1}{\tau d_i (d_i +1)}$ and a necessary condition for a Nash Equilibrium is $d_i < k$. On the other hand, $ k \le  \sqrt{N-\frac{3}{4}} - \frac{1}{2} \le \sqrt{N-1}$, hence $d_i < \sqrt{N-1}$. Therefore, we have less than $\sqrt{N-1}$ nodes that are on a distance $1$ from a node $i$. Each of these nodes is directly connected by less than $\sqrt{N-1}-1$ nodes different from $i$. Hence, there less than $\sqrt{N-1}+\sqrt{N-1}(\sqrt{N-1}-1) = N-1$ nodes that are at most $2$ hops from $i$, hence at least one node that is more than $2$ hops away from $i$, a contradiction to the general claim (before (a))! Hence, $K_N$ is the only Nash Equilibrium and $\text{PoA} = \text{PoS} = 1$.
\end{itemize}

\hspace{-1em} \textbf{(c)} If $\gamma - \frac{1}{\tau (N-1)} \le \alpha < 2\gamma - \frac{1}{\tau}$, then $K_{1,N-1}$ is a Nash Equilibrium. Graphs that are of diameter at most $2$ are also candidates for a Nash Equilibrium. 

Because the diameter of the graph is not bigger than $2$, (\ref{socialCostWiener}) becomes an equality $J = N+ 2\gamma N (N-1)+ (\alpha-2 \gamma) L  -\frac{1}{\tau} \sum_{i=1}^N \frac{1}{d_i}$ for sufficiently large $\tau$. Applying the condition of (c) leads to
\vspace{-1em}
\small{
\begin{align}
J(\text{worst NE}) & <  N+ 2\gamma N (N-1) - \frac{1}{\tau}  L  -\frac{1}{\tau} \sum_{i=1}^N \frac{1}{d_i} \notag\\
&= N+ 2\gamma N (N-1) - \frac{1}{\tau} \sum_{i=1}^N (\frac{d_i}{2} +\frac{1}{d_i}) \notag \\
& \le N+ 2\gamma N (N-1) - \frac{3N}{2\tau} = N(1+2\gamma (N-1) - \frac{3}{2\tau}) \label{socialCostUpperBound}
\end{align}
}\normalsize due to the fact that $\frac{d_i}{2} +\frac{1}{d_i} \ge \frac{3}{2}$ (equivalent to $(d_{i} -1)(d_{i} -2) \ge 0$). Equality holds (only) in the last line of (\ref{socialCostUpperBound}) if $d_i=1$ or $d_i=2$ for all $i$ (e.g., the ring $C_N$ or the path $P_N$ graphs), otherwise a strict inequality in the second part also holds. %Two examples of connected graphs with all degrees $d_i \in \{1,2\}$ are the ring $C_N$ or path $P_N$ graphs (but not every combination of $d_i = 1 \text{ or }2$ leads to a connected graph). 
Finally, knowing that the optimal social cost is attained by the complete graph $K_N$ and the condition inequality condition in (c) for $J(K_N)$: $\text{PoA} = \frac{J(\text{worst NE})}{J(K_N)} < \frac{1+2\gamma (N-1) - \frac{3}{2\tau}}{1+ \frac{3\gamma (N-1)}{2} - \frac{1}{2\tau} -\frac{1}{\tau(N-1)}} = \frac{\frac{4}{3}(1+ \frac{3\gamma (N-1)}{2} - \frac{1}{2\tau} -\frac{1}{\tau(N-1)})-(\frac{1}{3} + \frac{5}{6 \tau}-\frac{4}{3\tau(N-1)})}{1+ \frac{3\gamma (N-1)}{2} - \frac{1}{2\tau} -\frac{1}{\tau(N-1)}} \\
\le \frac{\frac{4}{3}(1+ \frac{3\gamma (N-1)}{2} - \frac{1}{2\tau} -\frac{1}{\tau(N-1)})}{1+ \frac{3\gamma (N-1)}{2} - \frac{1}{2\tau} -\frac{1}{\tau(N-1)}} = \frac{4}{3}$ for each $N\ge 3$; because $\frac{1}{3}+\frac{5}{6 \tau} - \frac{4}{3\tau (N-1)}>0$. This bound is approached, for instance, when $\alpha$ and $\gamma$ are large and bigger than $\tau$: $K_N$ is the social optimum and $K_{1,N}$ is the worst-case Nash Equilibrium and the bounds in (\ref{socialCostUpperBound}) and the inequality for PoA are closely approached. If $\alpha < \gamma - \frac{1}{\tau (N-1)(N-2)}$, $K_N$ is a Nash Equilibrium and $\text{PoS}=1$, otherwise $\text{PoS}>1$.% it is not and

\subsection*{Case $\alpha > 2\gamma - \frac{1}{\tau}$.}  
%In this case, both for the sum of hopcounts, due to~\cite{Entringer76}, and the sum of probabilities of being infected (Section~\ref{sec:networkFormationVirusSpread}) are maximized for path graph. Let us explore the relation with the diameter in the graph. 
%It is known~\cite{Schoone87} that the diameter in a graph is increased to no more than $2d$ in a connected graph, if a link is removed from the graph, where $d$ is the diameter in the original graph. In a Nash equilibrium, we will upper bound the number of links $L$ and diameter $d$ in such a graph. %(In fact, a more general fact holds~\cite{Schoone87}: the diameter would be increased to no more than $(k+1)d$, if $k$ links are removed.) 
% Albers \emph{et al.}~\cite{Albers:2006:NEN:1109557.1109568} have proved that is possible to have non-tree equilibria, while the case of tree equilibria is a case when $\frac{\alpha}{\gamma}$ is big and $\tau$ is big.
We first consider the links, whose deletion leaves the graph connected. For any node $i$, we focus on the links installed by $i$. Let $l=(i,j)$ be one such link and the number of all nodes $q$ that use $l$ as a link for the shortest paths from $j$ to $q$ is $z$. According to Schoone \emph{et al.}~\cite[Theorem 2.1., case $k=1$]{Schoone87}, all the distances from $i$ to the other nodes are increased by at most $2d$, where $d$ is the diameter in the original graph. In a Nash Equilibrium, $2d z \gamma - \alpha - \frac{1}{\tau d_i (d_i - 1)} >0$ for any possible value of $d_i \ge 2$, i.e. we obtain $2d z \gamma - \alpha - \frac{1}{2\tau} > 0$. Hence, $z > \frac{\alpha + \frac{1}{2\tau}}{2d \gamma}$ and then the number of such links to node $j$ is not bigger than $\frac{2d \gamma N}{\alpha + \frac{1}{2\tau}}$. Taking into account all possible nodes, the number of links whose deletion does not disconnect the graph is not bigger than $\frac{2d\gamma N^2}{\alpha + \frac{1}{2\tau}}$. On the other hand there are at most $(N-1)$ links such that a removal of any of those links disconnects the graph. Indeed, a connected graph has a spanning tree $T$ and a link removed from $T$ disconnects the graph, while a removal of a link that is not in $T$ leaves the graph connected. Therefore,
\small{
\begin{align}
L &\le N-1 + \frac{2d \gamma N^2}{\alpha + \frac{1}{2\tau}} \label{bound:NumberLinks}
\end{align}
}\normalsize If two nodes $i$ and $j$ are $a$ hops apart from each other, adding a link from $i$ to $j$ would reduce the hopcounts from $i$ to all the nodes in the ``second half'' along the previous path to $j$ by at least half of their lengths, by $a-1, a-3,\ldots, 1$ for $a$ even or by $a-1, a-3,\ldots, 2$ for $a$ odd. Hence, the total reduction in the sum of shortest paths from $i$ is $\sum_{i=1}^{\frac{a}{2}} (2i-1) = \frac{a^2}{4}$ or $\sum_{i=1}^{\frac{a-1}{2}} (2i) = \frac{a^2-1}{4}$ for $a$ even or odd, respectively. Assuming a Nash Equilibrium and $i$ is a starting node on the diameter, considering the change in cost $J_i$, the following inequality would hold for any $d$: $\alpha - \frac{d^2-1}{4} \gamma + \frac{1}{\tau d_i (d_i +1)}>0$. Using $d_i \ge 1$, and the absolute maximum for $d$ being $N-1$, we arrive at
\small{
\begin{align}
d &<\min \{\sqrt{1+\frac{4}{\gamma} (\alpha+\frac{1}{2 \tau})},N\} \label{bound:Diameter}
\end{align}
}\normalsize Each node $i$ has at least one neighbor and all the others are no more than $d$ hops apart, hence: $\sum_{j=1}^{N} h(i,j) \le 1 + (N-1) d$. Applying the arithmetic-harmonic mean inequality leads to $\frac{1}{\tau} \sum_{i=1}^N \frac{1}{d_i} \ge \frac{1}{\tau} \frac{N^2}{\sum_{i=1}^N d_i} = \frac{N^2}{2L \tau}$. We proceed by upper bounding $J$ in (\ref{socialCostWiener}),
\small{
\begin{align}
J & \le N+\alpha L + \gamma N(1 + (N-1) d) -\frac{N^2}{2L \tau} \label{upperBoundSocWelfare}
\end{align}
}\normalsize Applying (\ref{bound:NumberLinks}) modifies (\ref{upperBoundSocWelfare}), into
\footnotesize{
\begin{align}
J \le&
%& N \notag\\
%&+\alpha (N-1 + \frac{2dN^2}{\alpha + \frac{1}{2\tau}}) + \gamma N(1 + (N-1) d) -\frac{N^2}{2(N-1 + \frac{2dN^2}{\alpha + \frac{1}{2\tau}}) \tau}\\
N+\alpha (N-1) + \gamma N + N\big( (N-1)+\frac{2\alpha N}{\alpha + \frac{1}{2\tau}}\big) \gamma d \notag\\
&-\frac{N^2}{2\tau(N-1 + \frac{2d \gamma N^2}{\alpha + \frac{1}{2\tau}}) }  \label{upperBoundApplyingL} 
%\\
%& \le N+\alpha (N-1 + \frac{2dN^2}{\alpha + \frac{1}{2\tau}}) + \gamma N(1 + (N-1) d) -\frac{N^2}{2(N-1 + \frac{2dN^2}{\alpha + \frac{1}{2\tau}}) \tau}
\end{align}
}\normalsize
%\begin{figure*}[h!tb]
%\centering
%\subfloat[Average Price of Anarchy (PoA).]{
%\includegraphics[trim = 12.8mm 67mm 24.8mm 68mm,clip,width=0.27\textwidth]{images/VSPC_POA.pdf}
%\label{VSCNPoA}}
%\subfloat[Average number of links.]{
%\includegraphics[trim = 12.8mm 67mm 24.8mm 68mm,clip,width=0.27\textwidth]{images/VSPCNumberLinks.pdf}
%\label{VSCNNumberLinks}}
%\subfloat[Average hopcount.]{
%\includegraphics[trim = 12.8mm 67mm 24.8mm 68mm,clip,width=0.27\textwidth]{images/VSPC_AvgShorestPath.pdf}
%\label{VSCNAvgShorestPath}}
%\caption[]{Simulation results of the heuristic algorithm for the obtained networks in a Nash Equilibrium.} \label{fig:VSCNSimulation}
%\end{figure*}

\begin{figure*}[h!tb]
\centering
\subfloat[Average Price of Anarchy (PoA) $\gamma = 5$.]{
\includegraphics[trim = 12.8mm 67mm 24.8mm 68mm,clip,width=0.32\textwidth]{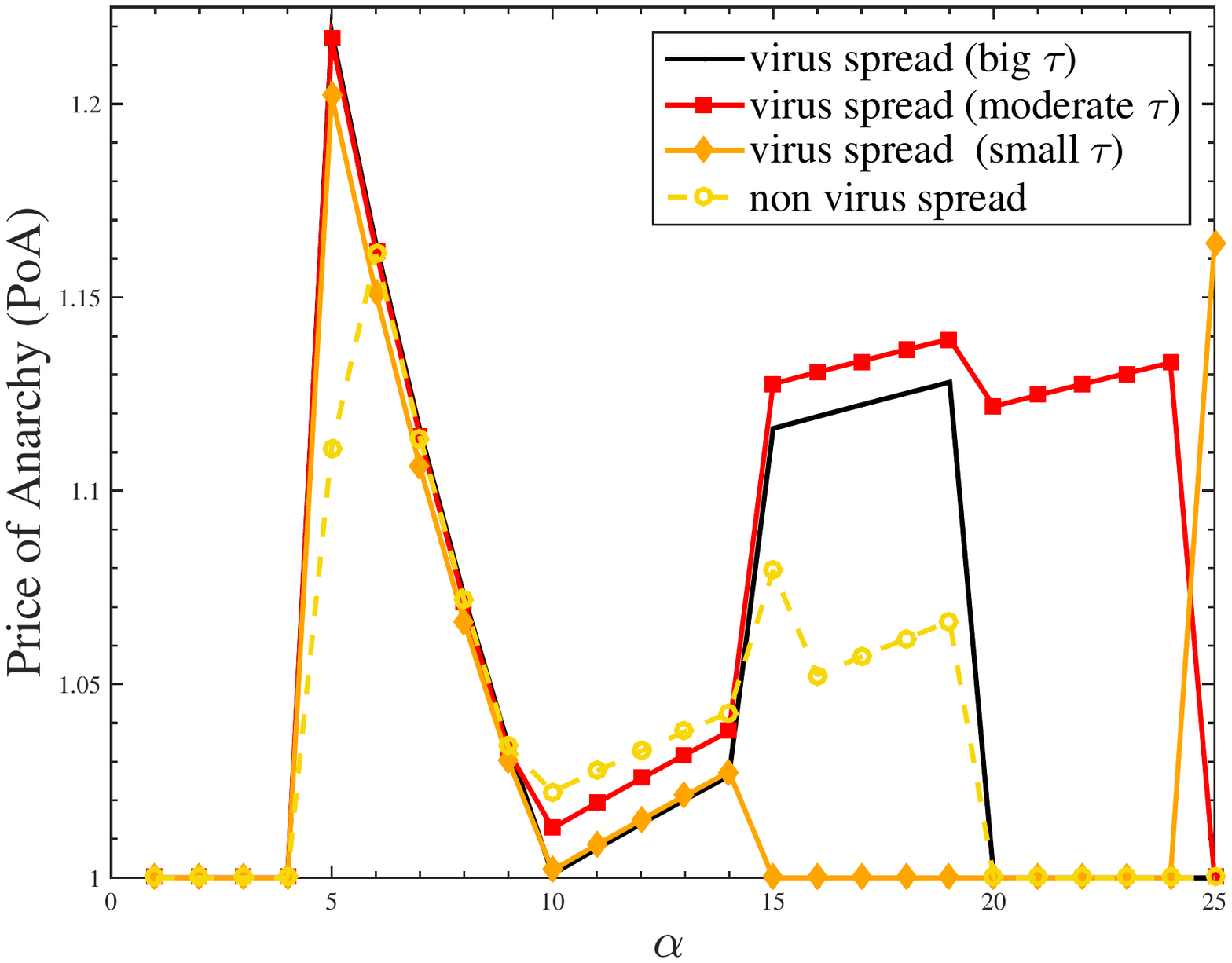}
\label{VSCNPoA}}
\subfloat[Average number of links $\gamma = 5$.]{
\includegraphics[trim = 12.8mm 67mm 24.8mm 68mm,clip,width=0.32\textwidth]{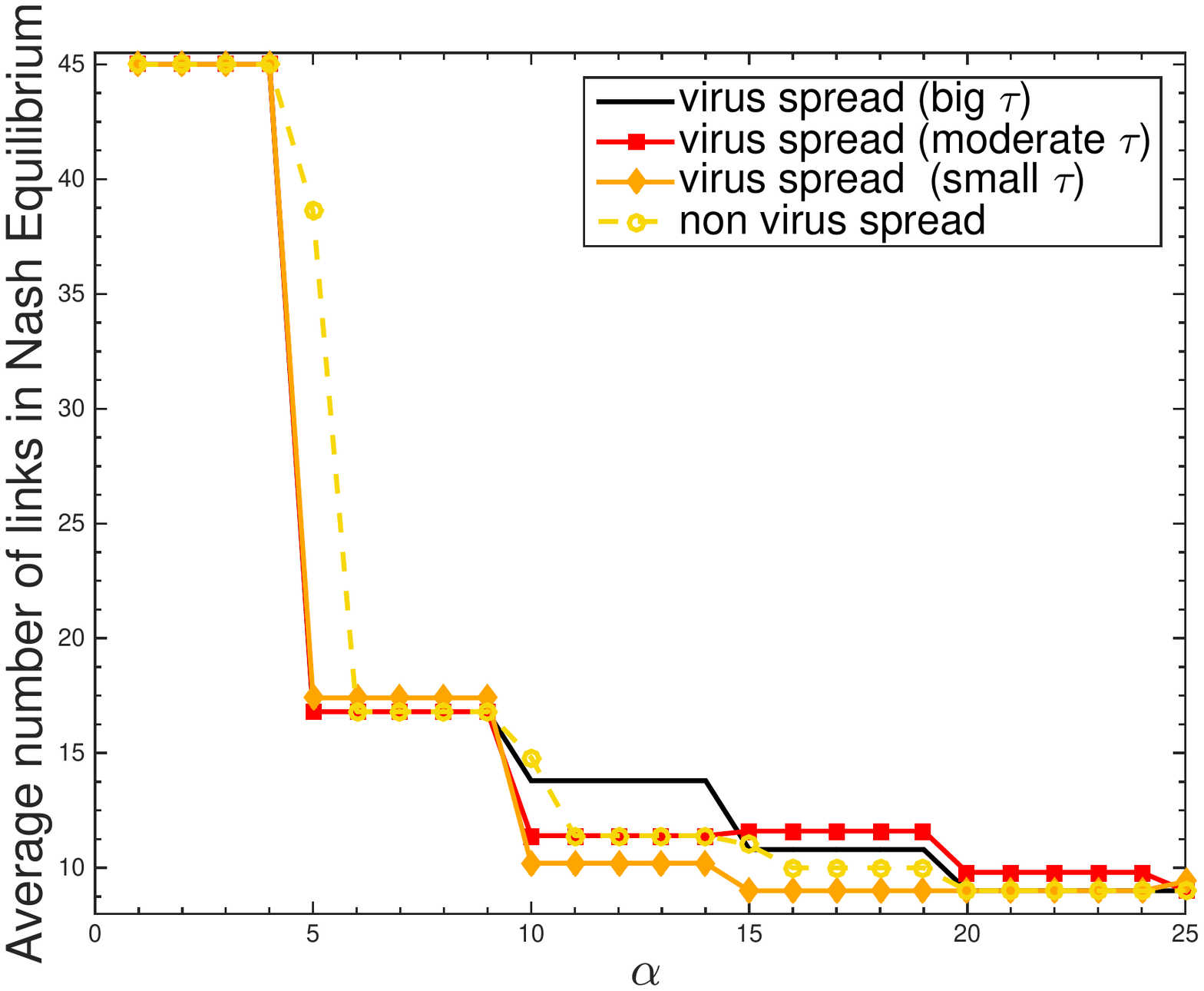}
\label{VSCNNumberLinks}}
\subfloat[Average hopcount $\gamma = 5$.]{
\includegraphics[trim = 12.8mm 67mm 24.8mm 68mm,clip,width=0.32\textwidth]{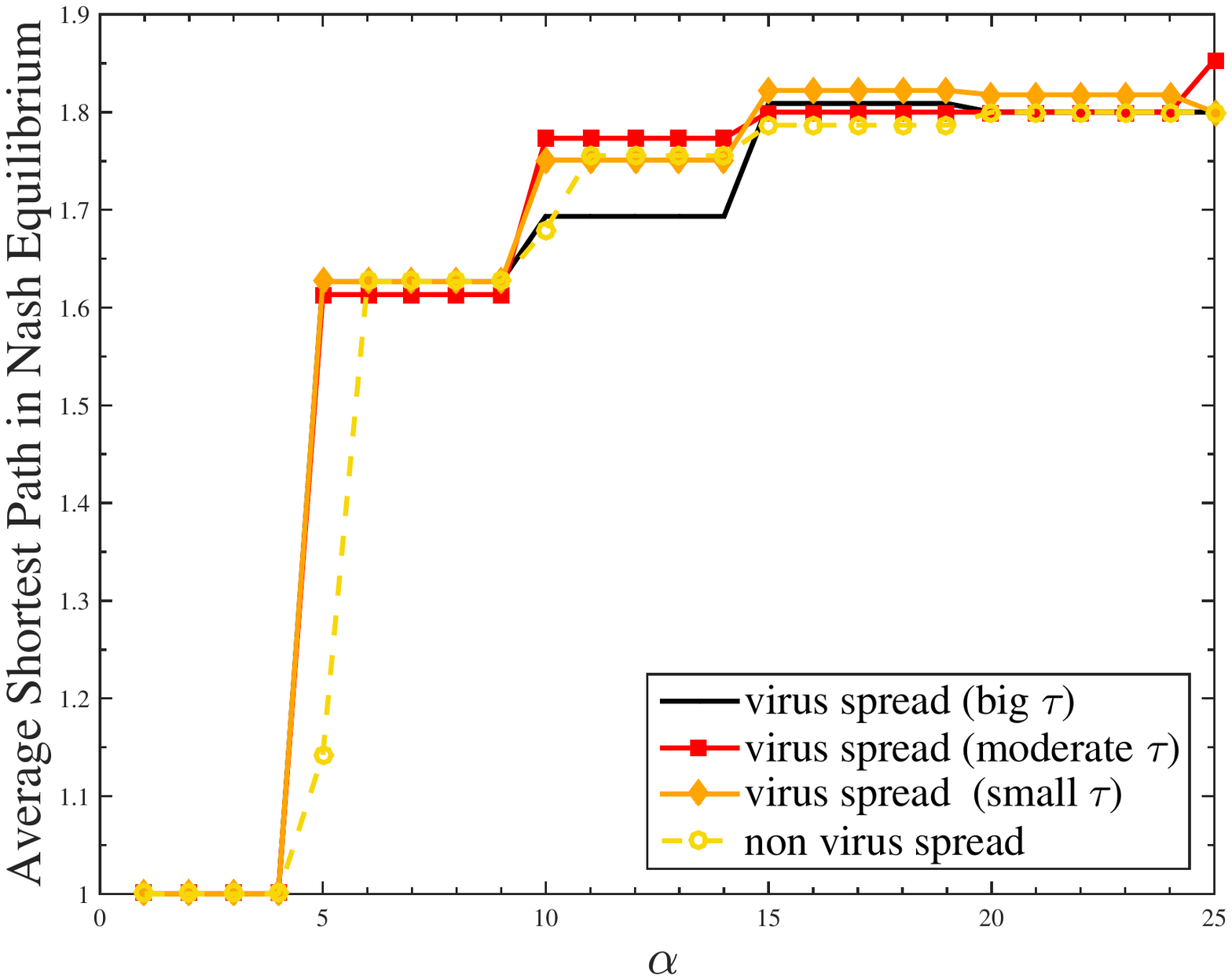}
\label{VSCNAvgShorestPath}}

\subfloat[Average Price of Anarchy (PoA) $\gamma = 1$.]{
\includegraphics[trim = 12.8mm 67mm 24.8mm 68mm,clip,width=0.32\textwidth]{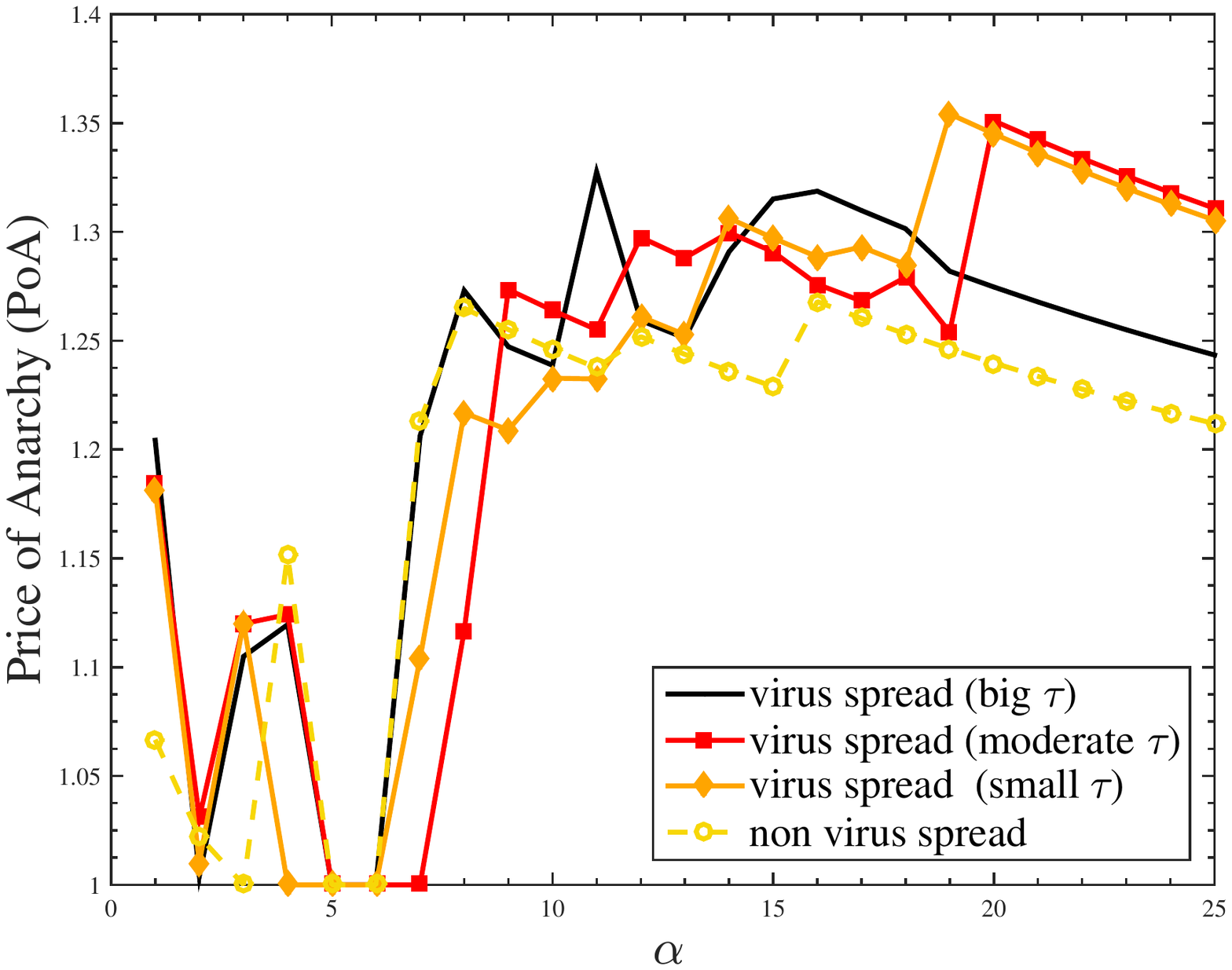}
\label{VSCNPoA1}}
\subfloat[Average number of links $\gamma = 1$.]{
\includegraphics[trim = 12.8mm 67mm 24.8mm 68mm,clip,width=0.32\textwidth]{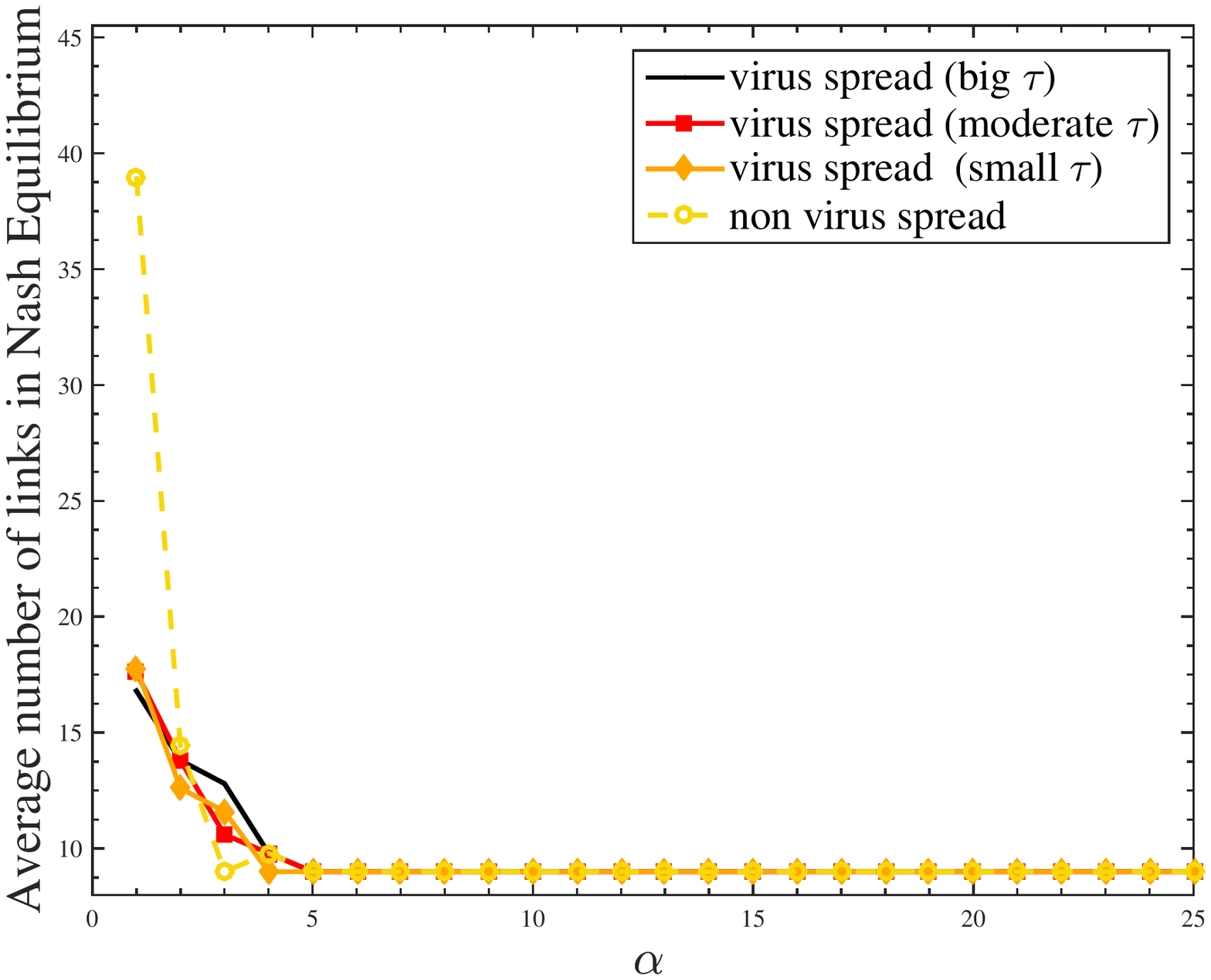}
\label{VSCNNumberLinks1}}
\subfloat[Average hopcount $\gamma = 1$.]{
\includegraphics[trim = 12.8mm 67mm 24.8mm 68mm,clip,width=0.32\textwidth]{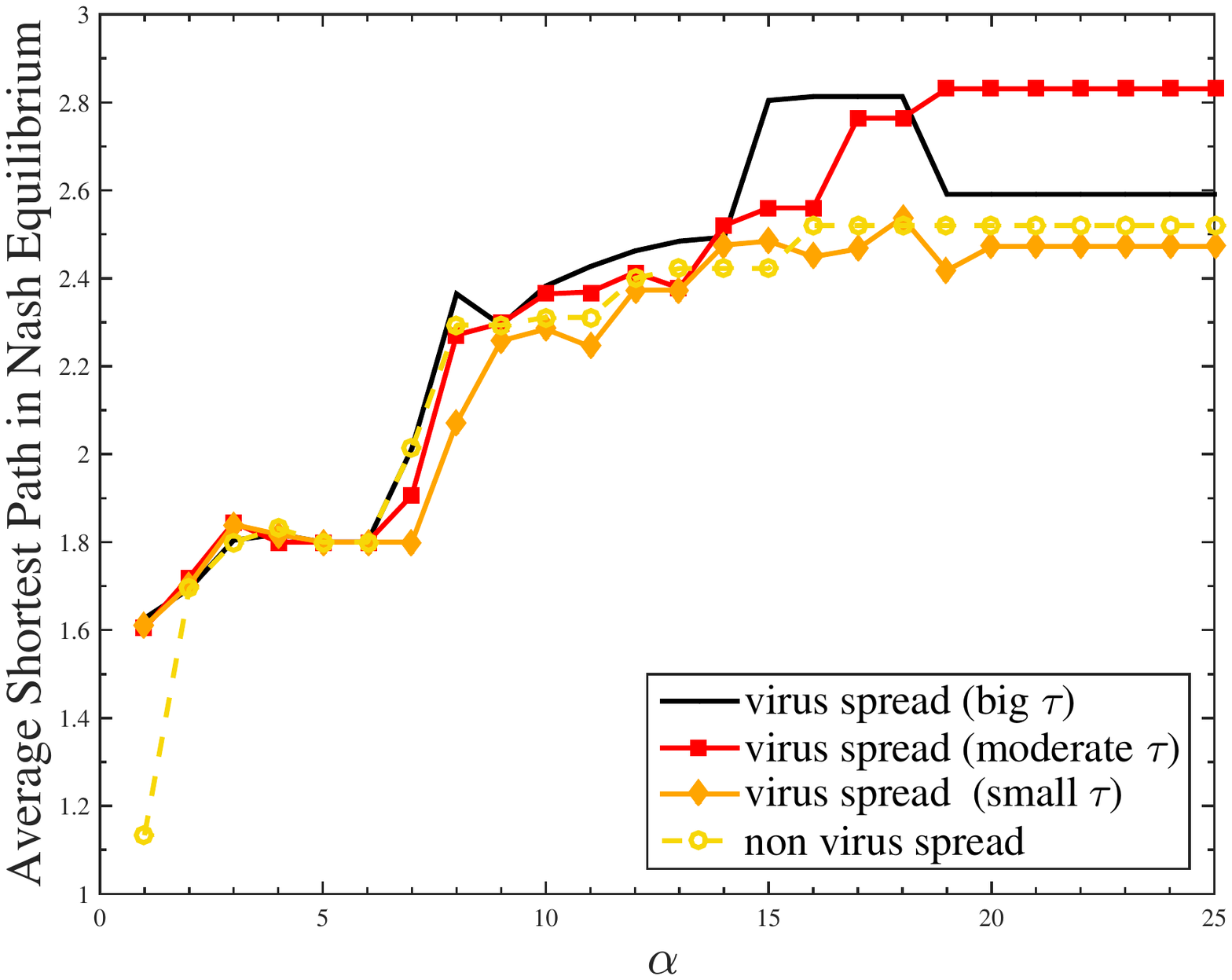}
\label{VSCNAvgShorestPath1}}

\subfloat[Average Price of Anarchy (PoA) $\gamma = 0.1$.]{
\includegraphics[trim = 12.8mm 67mm 24.8mm 68mm,clip,width=0.32\textwidth]{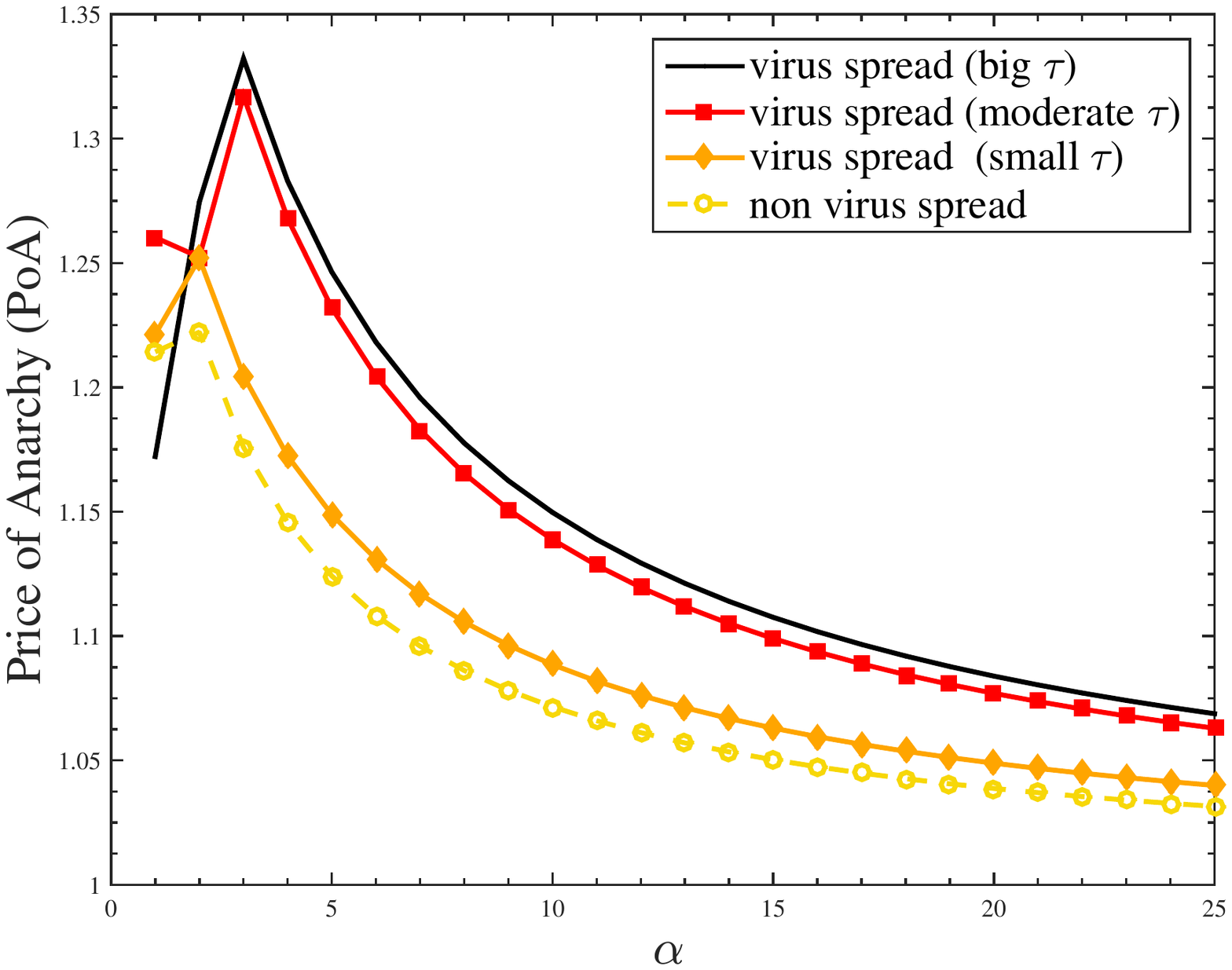}
\label{VSCNPoA01}}
\subfloat[Average number of links $\gamma = 0.1$.]{
\includegraphics[trim = 12.8mm 67mm 24.8mm 68mm,clip,width=0.32\textwidth]{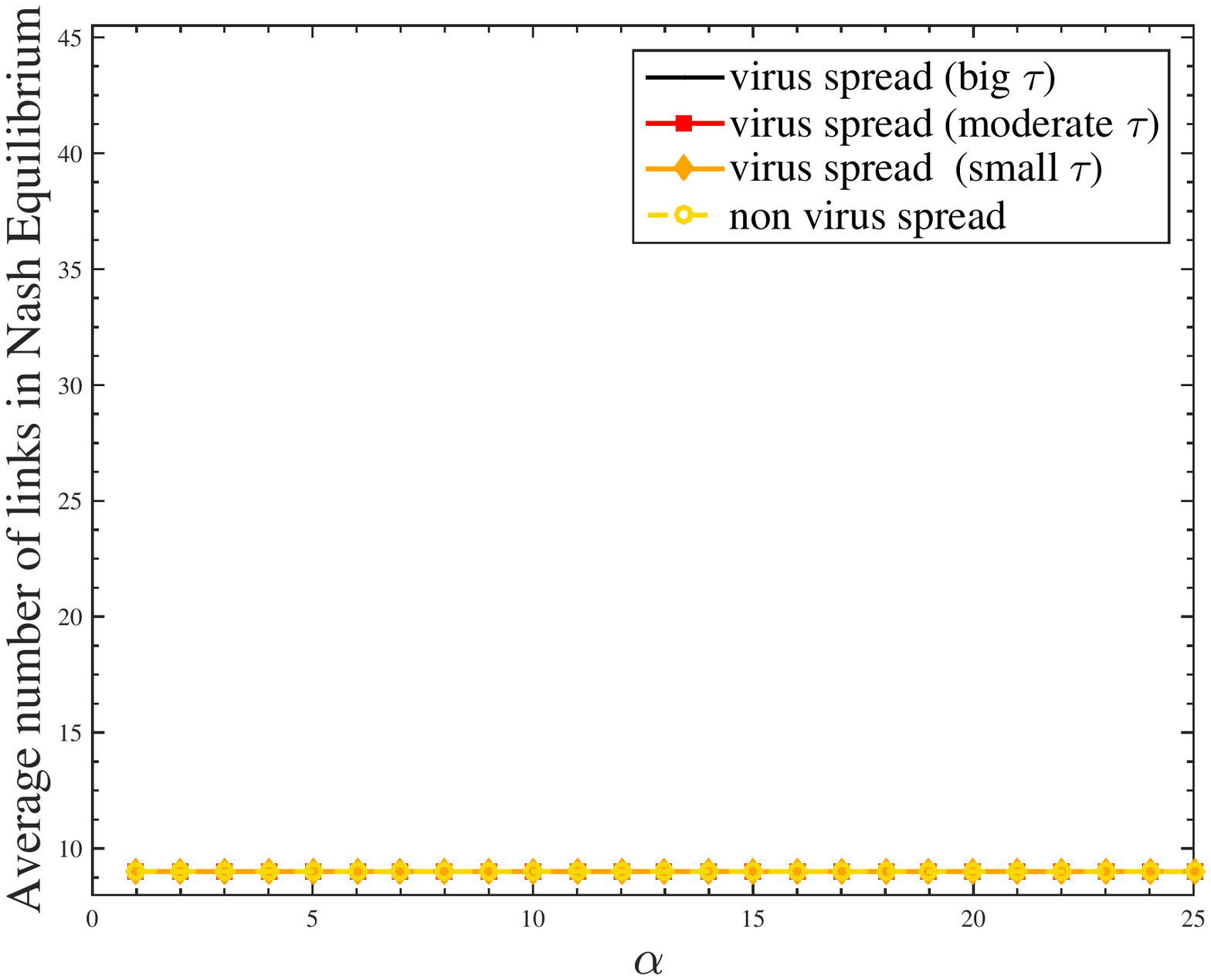}
\label{VSCNNumberLinks01}}
\subfloat[Average hopcount $\gamma = 0.1$.]{
\includegraphics[trim = 12.8mm 67mm 24.8mm 68mm,clip,width=0.32\textwidth]{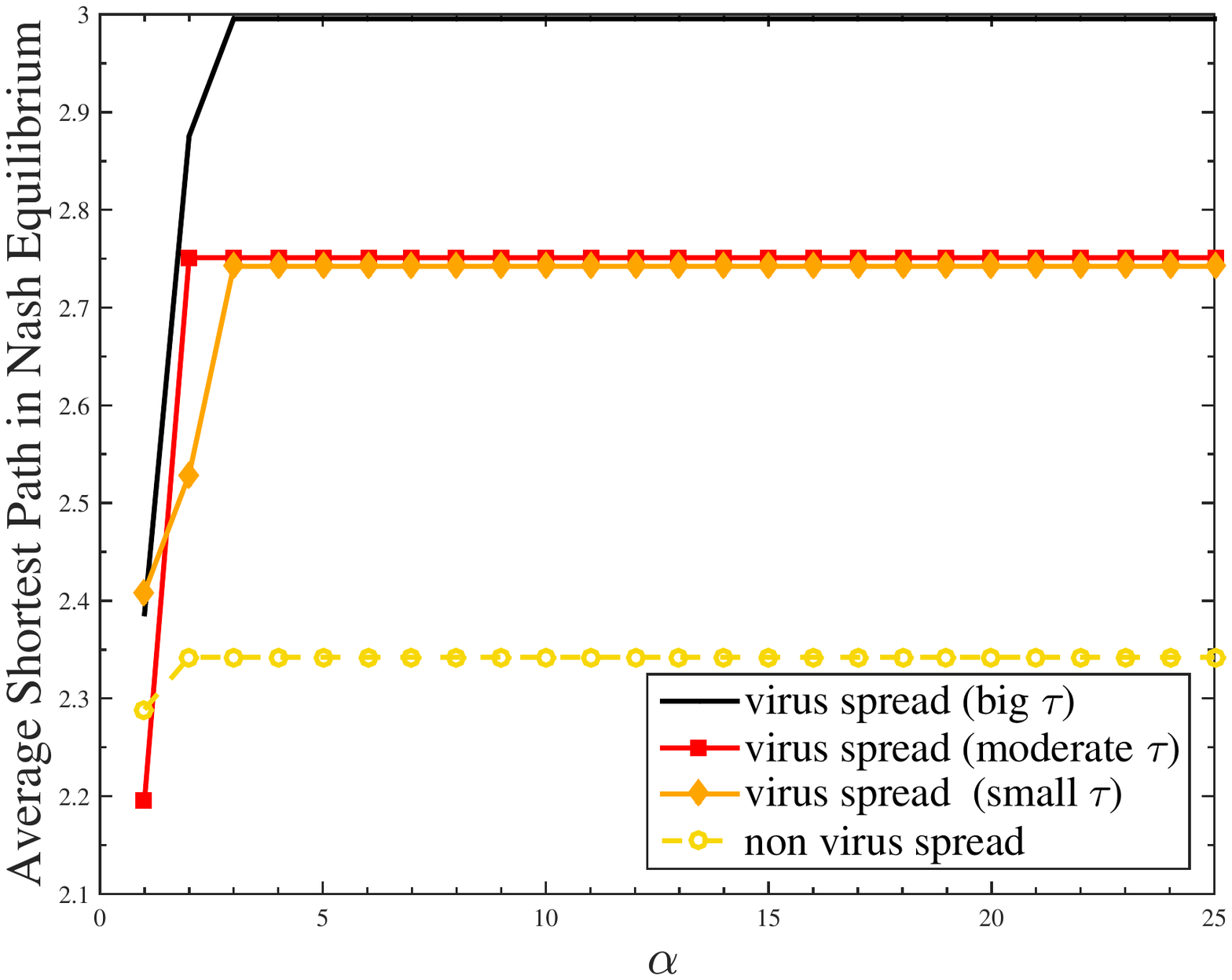}
\label{VSCNAvgShorestPath01}}

\caption[]{Simulation results of the heuristic algorithm for the obtained networks in a Nash Equilibrium. The three regimes big, moderate and small $\tau$ are represented with values $5.2$, $1.4$ and $1$, respectively. The number of nodes is $N=10$.} \label{fig:VSCNSimulation}
\end{figure*}

We distinguish two sub-cases, (a) and (b):

\hspace{-1em}\textbf{(a)} If $\alpha \ge 2\gamma - \frac{2}{\tau (N-1)}$, then the optimal social cost (and a Nash Equilibrium) is achieved for the star graph $K_{1,N-1}$, hence PoS=1. Now, using (\ref{upperBoundApplyingL}) for PoA,
\small{
\begin{align}
\text{PoA} \le& \frac{
N+\alpha L + \gamma N(1 + (N-1) d) -\frac{N^2}{2L \tau} }{N+ \alpha(N-1) + 2 \gamma (N-1)^2 - \frac{(N-1)^2+1}{\tau(N-1)}} \label{PoAtemp}
\end{align}
}\normalsize and applying (\ref{bound:NumberLinks}),
\footnotesize{
\begin{align}
\text{PoA} \le& \frac{N+\alpha (N-1) +  N \gamma + N \gamma(N-1+\frac{2  \alpha N}{\alpha + \frac{1}{2\tau}})d -\frac{N^2}{2\tau(N-1 + \frac{2d \gamma N^2}{\alpha + \frac{1}{2\tau}})}}{N+ \alpha(N-1) + 2 \gamma (N-1)^2 - \frac{(N-1)^2+1}{\tau(N-1)}}  \label{boundingPoA1} 
\end{align}
}\normalsize %We consider two sub-cases:

\begin{itemize}[noitemsep,nolistsep]
\item $\gamma \cdot d$ is not infinitesimally small. For sufficiently large\footnote{Significantly larger compared to the coefficients $\alpha$, $\gamma$ and $\tau$.} $N$, after some algebraic transformation, division by $N^2$ in both the numerator and denominator of (\ref{boundingPoA1}) and applying (\ref{bound:NumberLinks}), %considering $L\le O(N^2)$, %we get
\small{
\begin{align}
\text{PoA} \le& O\big((\frac{1}{2}+\frac{\alpha}{\alpha + \frac{1}{2\tau}})  \sqrt{1+\frac{4}{\gamma} (\alpha+\frac{1}{2 \tau})} \big) \label{PoAgeneralbound}
\end{align}
}\normalsize %In the special case of neglecting the ``virus-spread part'' in (\ref{PoAgeneralbound}), we have $O(\sqrt{\frac{\alpha}{\gamma}})$, which is in accordance to~\cite{Fabrikant:2003:NCG:872035.872088}.

\item $\gamma \cdot d$ is infinitesimally small. According to (\ref{bound:NumberLinks}), $L = O(N)$. Now, (\ref{PoAtemp}) yields
\small{
\begin{align}
\text{PoA} \le& O\Big(\frac{
N+\alpha L -\frac{N^2}{2L \tau} }{N+ \alpha(N-1) - \frac{(N-1)^2+1}{\tau(N-1)}}\Big)
\end{align}
}\normalsize
\end{itemize}

\hspace{-1em}\textbf{(b)} If $2\gamma - \frac{2}{\tau (N-1)} \ge \alpha \ge 2\gamma - \frac{1}{\tau}$, then the optimal social cost is achieved for the complete graph $K_{N}$, and using (\ref{upperBoundApplyingL}), $\text{PoA} \le \frac{
N+\alpha L + \gamma N(1 + (N-1) d) -\frac{N^2}{2L \tau} }{N+ \alpha \frac{N(N-1)}{2} + \gamma N(N-1) - \frac{N}{\tau(N-1)}}$. Now,

\begin{itemize}[noitemsep,nolistsep]
\item $\gamma \cdot d$ is not infinitesimally small. Using the bound for $L$ in (\ref{bound:NumberLinks}), for sufficiently high enough$^{\text{6}}$ $N$, (\ref{boundingPoA1}) is transformed into $\text{PoA} \le  O\Big( 2 \frac{\sqrt{\gamma^2+4\gamma (\alpha+\frac{1}{2 \tau})} }{ \alpha  + 2 \gamma} (1+ \frac{2\alpha}{\alpha + \frac{1}{2\tau}} ) \Big)$ and we have a constant value for PoA.

\item $\gamma \cdot d$ is infinitesimally small. Then $\alpha$ is small and PoA has a value close to $1$.
\end{itemize}

\hspace{-1.1em}Based on these results, we present Theorem~\ref{theor:PoANE_generalGames}.
\begin{theorem}\label{theor:PoANE_generalGames} For sufficiently high $\tau$ in the \textsc{VSPC} game, the PoA depends on the parameters $\alpha$, $\gamma$ and $\tau$,% and
\begin{enumerate}
\item if $\alpha \ge 2\gamma - \frac{2}{\tau (N-1)}$, then PoS=1 and if
\begin{itemize}[noitemsep,nolistsep]
\item $\gamma d$ is not small, then \\$\text{PoA} \le O\big((\frac{1}{2}+\frac{\alpha}{\alpha + \frac{1}{2\tau}})  \sqrt{1+\frac{4}{\gamma} (\alpha+\frac{1}{2 \tau})} \big) $.

\item $\gamma \cdot d$ is small, then PoA is given by Corollary~\ref{corr:PoSPoA}. %\hspace{-0.45cm}
\end{itemize} 

\item if $2\gamma - \frac{2}{\tau (N-1)} \ge \alpha \ge 2\gamma - \frac{1}{\tau}$ and
\begin{itemize}[noitemsep,nolistsep]
\item $\gamma d$ is not small, then \\$\text{PoA} \le O\Big( 2 \frac{\sqrt{\gamma^2+4\gamma (\alpha+\frac{1}{2 \tau})} }{ \alpha  + 2 \gamma} (1+ \frac{2\alpha}{\alpha + \frac{1}{2\tau}} ) \Big)$.% is also constant.

\item $\gamma \cdot d$ (and $\gamma$) is small, then $\alpha$ is also small and we have a constant value for PoA close to $1$.
\end{itemize} 

\item if $2\gamma - \frac{1}{\tau} > \alpha \ge \gamma - \frac{1}{\tau (N-1)} $, then Nash Equilibria graphs have diameters at most two and $\text{PoS} \le \text{PoA} < \frac{4}{3}$.

\item if $\gamma - \frac{1}{\tau (N-1)} > \alpha$, then $K_N$ is the only Nash Equilibrium and $\text{PoA}=\text{PoS}=1$.
\end{enumerate}
\end{theorem}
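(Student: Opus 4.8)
The plan is to treat Theorem~\ref{theor:PoANE_generalGames} as a consolidation of the four parameter regimes examined in the case analysis above, in each regime pairing the optimal social cost --- which fixes the denominator of the PoA --- with the structure of the worst-case Nash Equilibrium, which controls the numerator. The optimal social cost is handed to us directly by Theorem~\ref{socialCostPerformranceGame}: the star $K_{1,N-1}$ is optimal exactly when $\alpha \ge 2\gamma - \frac{2}{\tau(N-1)}$ and the complete graph $K_N$ otherwise. I would therefore begin by sorting the four thresholds relative to this cutoff so that the right optimal topology enters the denominator throughout, and I would record that $\text{PoS}=1$ holds automatically, by the definition in (\ref{PoA_PoS_definitions}), whenever the optimal topology is itself a Nash Equilibrium.

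I would dispatch the two clean regimes (4) and (3) first. In regime (4) the hypothesis $\alpha < \gamma - \frac{1}{\tau(N-1)}$ forces $\alpha < 2\gamma - \frac{2}{\tau(N-1)}$ (since it implies $\gamma > \frac{1}{\tau(N-1)}$), so $K_N$ is the social optimum; combining this with the established equivalence ``$K_N$ is a Nash Equilibrium $\iff \alpha \le \gamma - \frac{1}{\tau(N-1)}$'' and the fact that no other graph is an equilibrium in this range yields $\text{PoA}=\text{PoS}=1$. In regime (3), where $\gamma - \frac{1}{\tau(N-1)} \le \alpha < 2\gamma - \frac{1}{\tau}$, I would first invoke the diameter argument already used for the case $\alpha < 2\gamma - \frac{1}{\tau}$: any node at least three hops from another can strictly reduce its cost by installing a single link, so every Nash Equilibrium has diameter at most two. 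This turns (\ref{socialCostWiener}) into an equality, gives the worst-case estimate (\ref{socialCostUpperBound}), and dividing by $J(K_N)$ then yields the stated $\frac{4}{3}$ bound.

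The asymptotic regimes (1) and (2) carry the real weight. Here the optimum is the star (regime 1) or $K_N$ (regime 2), fixing the denominator, and in regime (1) $\text{PoS}=1$ because the star is a Nash Equilibrium whenever $\alpha \ge \gamma - \frac{1}{\tau(N-1)}$, a condition that holds throughout the regime. For the numerator I would feed the structural bounds into the upper estimate on $J$: the link bound (\ref{bound:NumberLinks}) and the diameter bound (\ref{bound:Diameter}) convert (\ref{upperBoundSocWelfare}) into (\ref{upperBoundApplyingL}), after which forming the ratio with the optimal cost, dividing numerator and denominator by $N^2$, and keeping leading-order terms in $N$ produces the radical expressions asserted in items (1) and (2).

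The main obstacle is the dichotomy ``$\gamma d$ small versus not small,'' which is precisely the source of the two sub-bullets inside regimes (1) and (2). When $\gamma d$ is not infinitesimally small the diameter bound (\ref{bound:Diameter}) is the dominant term and the asymptotics deliver the explicit radicals; when $\gamma d$ is small the hopcount contribution becomes negligible, (\ref{bound:NumberLinks}) forces $L = O(N)$, and the game effectively degenerates to the virus-cost game, so that the PoA is governed by Corollary~\ref{corr:PoSPoA} in regime (1) and stays close to $1$ in regime (2). Making this split rigorous --- pinning down what ``small'' means and verifying that the leading-order asymptotics are uniform in the remaining parameters for sufficiently large $N$ --- is the delicate part; everything else reduces to the algebraic simplification already prepared in (\ref{upperBoundApplyingL}).
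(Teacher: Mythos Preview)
Your proposal is correct and matches the paper's approach essentially line for line: the paper likewise presents Theorem~\ref{theor:PoANE_generalGames} as a consolidation of the preceding case analysis, using Theorem~\ref{socialCostPerformranceGame} to fix the denominator, the diameter-two argument plus (\ref{socialCostUpperBound}) for regime~(3), the sub-cases (a)--(b) of $\alpha<2\gamma-\tfrac{1}{\tau}$ for regime~(4), and the chain (\ref{bound:NumberLinks})--(\ref{bound:Diameter})--(\ref{upperBoundSocWelfare})--(\ref{upperBoundApplyingL}) followed by division by $N^2$ for the asymptotic regimes~(1) and~(2). Your identification of the ``$\gamma d$ small versus not small'' split as the only genuinely delicate step is also exactly where the paper leaves the argument at the level of $O(\cdot)$ asymptotics rather than explicit constants.
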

\hspace{-1.25em}Theorem~\ref{theor:PoANE_generalGames} and Corollaries~\ref{corr:PoSPoA} and \ref{corr:PoSPoA2} are compatible for small $\gamma$.
%For any game formation, there exists an equivalence relation between Nash Equilibria and pairwise Nash Equilibria~\cite{Corbo2005}.  Moreover, $PNE \subseteq PS$ by~\cite{Armengol2009}, hence a Nash Equilibrium implies pairwise stability, but the opposite does not hold.% is not necessarily correct.

\subsection{Computational aspects and simulations}
%Finding the best response of a player in \textsc{VSPC} game is NP-hard and efficient algorithms should not be expected, unless P=NP. 
Since, (i) for small $\tau$ below the epidemic threshold $\tau_c$, $v_{i \infty} =0$ and (ii) for $\tau = \infty$, $v_{i \infty} =1$ for all $i \in \{1,2,\ldots, N\}$, the problem of finding a best response in the \textsc{VSPC} game includes the best-response problem described in~\cite{Fabrikant:2003:NCG:872035.872088}, which is NP-hard.  We therefore use a best-response heuristic algorithm, as in~\cite{Chun2004}. The steps of the algorithm are the following:%Fabrikant:2003:NCG:872035.872088,
%, where a player periodically, in a consecutive order chooses the best choices for deleting and adding links in multiple iterations
\begin{enumerate}
\item We start with an initial random graph $G=G_{1}^1$.
\item Time $t$ is slotted and the first time slot is $t=1$.
\item Each node takes only two actions at each time slot $t$. We fix the order of actions from node $1$ to node $N$. The possible actions for each node are: dropping a link (D); adding a link (A); or doing nothing (N).
\item We denote by $G_t^i$ the graph at time $t$ before the action of node $i$.
\item Starting from node $1$, each node $i$ first computes the maximum reduction of its cost $J_i$ induced by dropping a link (D) from graph $G_t^{i}$, or takes action (N) if no reduction could be realized. Taking the obtained graph, node $i$ computes the maximum reduction of its cost $J_i$ induced by adding a link (A), or takes action (N) if no reduction could be realized.
\item After the decision of node $i$ at time $t$, the graph becomes $G_t^{i+1}$. After the decision of node $N$, the algorithm moves to time $t+1$ (i.e., to graph $G_{t+1}^{1}$).
\item An equilibrium is reached at time $t$ when all the nodes take the action (N) or the algorithm stops after a certain number of iterations $t_{\max}$ is reached.%(time-out) 
\end{enumerate}
%Naturally, the heuristic does not offer convergence guarantees. However, our main purpose is to show some interesting properties of the obtained networks.

In Fig.~\ref{fig:VSCNSimulation}, results are given for the Price of Anarchy (PoA), the average number of links and the average hopcount as a function of installation cost $\alpha$ for different effective infection rates $\tau$ (namely big, moderate and small $\tau$ with values $5.2$, $1.4$ and $1$, respectively that well represents the 3 regimes) and different weights (costs) for the hopcounts $\gamma$ in a graph with $N=10$ nodes. We also display some typical outcomes of the algorithm for different values of $\alpha$ and $\tau$ and three different values of $\gamma$ in Fig.~\ref{fig:VizTopol2Gamma5}, Fig.~\ref{fig:VizTopol2Gamma1} and Fig.~\ref{fig:VizTopol2Gamma01} (see the visualizations at the end of the paper, after the bibliography). For all the metrics shown in Fig.~\ref{fig:VSCNSimulation}, there is an interesting behavior for the curve with ``no virus,'' in the sense that it follows the same shape as the curves where the virus is present, but is often shifted/delayed from them. This is due to the ''enhancing'' effect from the virus spread on the installation cost contribution. 

For small values of $\alpha$, due to the resulting cheap installation cost, and for non-negligible performance values $\gamma$, the  Nash Equilibrium is a very dense graph, often the complete graph $K_N$, for all $\tau$ (Fig.~\ref{fig:VizTopol2Gamma5}). This reflects case 4) in Theorem~\ref{theor:PoANE_generalGames}, although the interval for $\alpha$, where $K_N$ is the only Nash Equilibrium would shrink (and may vanish) for small $\gamma$ (Fig.~\ref{fig:VizTopol2Gamma1} and Fig.~\ref{fig:VizTopol2Gamma01}). In the latter case, the corresponding PoAs in Fig.~\ref{VSCNPoA01} and Fig.~\ref{fig:VSCN10} have comparable shapes and the obtained topologies are trees (see Fig.~\ref{VSCNNumberLinks01}), although not necessarily star graphs (see Fig.~\ref{fig:VizTopol2Gamma01}). 

Because of the higher installation cost (higher $\alpha$) in intervals 3) and 2) from Theorem~\ref{theor:PoANE_generalGames}, the Nash Equilibria topologies are sparse: (i) in particular non-tree networks for $\gamma = 5$ (constant average number of links and sum of hopcounts) and mostly trees for $\gamma$ equal to $1$ or $0.1$. Consequently, the PoA linearly decreases with $\alpha$ on this interval. For the interval 1) in Theorem~\ref{theor:PoANE_generalGames} (high installation cost), the PoA increases with $\alpha$ (and $\tau$), reaching a local maximum and then have a different behavior for larger $\alpha$. Namely, it decreases towards $1$ for small $\gamma$, while it is unpredictable for higher $\gamma$ (the left column, subfigures (a), (d) and (g), in Fig.~\ref{fig:VSCNSimulation}). But most importantly the effect of the epidemics part is noticeable and higher $\tau$ introduces inefficiency, which is reflected by a high PoA. It is also important to note that for comparable $\alpha$ and $\gamma$, the algorithm displays somewhat fluctuating behavior in terms of PoA (middle row of Fig.~\ref{fig:VSCNSimulation}) due to the heuristic nature of the algorithm.
% Figs.~\ref{VSCNNumberLinks} and \ref{VSCNAvgShorestPath}

Being able to detect the intervals with high PoA, as we have done in this section, means that for those intervals some coordination/incentives of and for the players is needed. Since the PoS is generally low, in the best case with Nash Equilibrium a small amount of coordination likely suffices. For the intervals where the PoA is low, the selfish behavior of the players might still lead to (near)-optimal topologies without any coordination.

\section{Related work}\label{sec:RelatedWork}
Virus spread in networks has been thoroughly explored during the last decades~\cite{Omic09,Chakrabarti2008,Ganesh2005,VanMieghem2011,RevModPhys.87.925}. These works involve studies ranging from virus-spread propagation, the computation of the number of infected hosts~\cite{Omic09} to the epidemic threshold~\cite{Chakrabarti2008} in various epidemic models on networks. There is a large body of literature on game formation, that mostly minimizes a cost utility based on hopcount and the cost for installing links~\cite{Fabrikant:2003:NCG:872035.872088,Corbo2005,Albers:2006:NEN:1109557.1109568,Chun2004,Moscibroda:2006:TFS:1146381.1146403,NisanRoughgardenTardosVazirani(AlgoGameTheory)07,5062080}. Fabrikant \emph{et al.}~\cite{Fabrikant:2003:NCG:872035.872088} have studied the case, where a node's utility is a weighted sum of the installed links and the sum of hopcounts from each node in an undirected graph. The follow up work by Albers \emph{et al.}~\cite{Albers:2006:NEN:1109557.1109568} resolved some open questions from~\cite{Fabrikant:2003:NCG:872035.872088}. Chun \emph{et al.}~\cite{Chun2004} have conducted extensive simulations on the same type of game formation. A game formation problem involving hopcounts and costs, applied to P2P networks has been considered by Moscibroda \emph{et al.}~\cite{Moscibroda:2006:TFS:1146381.1146403}. Meirom \emph{et al.}~\cite{Meirom:2014:NFG:2600057.2602862,7218557} have provided dynamic and data analyses (apart from their static analysis) in an NFG setting with heterogenous players and robustness objectives. Nahir \emph{et al.}~\cite{5062080} have considered similar NFG problems in directed graphs. A coalition and bilateral agreements between players in NFG and game-theory, in general, have been considered in~\cite{Avrachenkov2011,Iosifidis2014,Yerramalli2014,Corbo2005}. In order to evaluate ``the goodness'' of the equilibria, the prices of anarchy and stability~\cite{NisanRoughgardenTardosVazirani(AlgoGameTheory)07,Koutsoupias2009} have been used. 

In this work, we have considered virus protection aspects together with cost and the length of shortest paths. In this sense, our work extends (with virus spread) and generalizes the related work~\cite{Fabrikant:2003:NCG:872035.872088,Albers:2006:NEN:1109557.1109568,5062080}. However, to the best of our knowledge, network formation games concerning virus spread and protection both with or without the performance aspects have not been considered in the NFG framework, although \emph{security games}~\cite{5062065,Aspnes2006,TechReport_Acemoglu2013,Lelarge2008,TCNSSTrajanovskiprotectionEpidemics,CDC2014_SISProtection} have been used in modeling the virus spread suppression and network immunization. Performance aspects, represented by the hopcounts are linearly independent from the resilience to virus spread-- the two metrics do not possess closed-form expressions-- making the NFG problem challenging, apart from the novelty.

%Game-theory has been used, in routing~\cite{Orda1993,Roughgarden2002}, wireless and mobile networks~\cite{Iosifidis2014,Shakkottai2007,Krunz2014,Yerramalli2014}, resource sharing in VoIP networks~\cite{Watanabe2008} or workload on the cloud~\cite{Nahir2012}.
%They provide some game theoretical results, namely that the PoA is a square root of the link installation cost.
%or similar game-theoretic concepts
% by Zick \emph{et al.}~\cite{Corbo2005,Zick:2012:OCF:2343776.2343809} and Anshelevich \emph{et al.}~\cite{Anshelevich2009}. 
%, while Bei \emph{et al.}~\cite{Bei20117147} study a problem under bounded budgets constraints
% network flow~\cite{Korilis1995} ,Altman2000
%,Anshelevich2009
% A game, where the utility is composed by the number of shortest paths traversing through a certain node under the condition of bounded budgets has been considered by Bei \emph{et al.}~\cite{Bei20117147}.

% Many papers on \emph{security games} have focused on the process of virus spread suppression and network immunizations, where players can control their curing rates~\cite{5062065} or can use other protection mechanisms~\cite{Aspnes2006,TechReport_Acemoglu2013,Lelarge2008}.

\section{Conclusion}\label{sec:Conclusion}
We have considered a novel network formation game (NFG), called the \emph{virus spread-performance-cost} (\textsc{VSPC}) game,  for communication networks in which the aspects link installation costs, virus infection probability, and performance in terms of the number of hops needed to reach other nodes in the network, all need to be balanced. We have characterized the Nash Equilibria and the Price of Anarchy (PoA) for various cases. In most of the cases, the PoA is not high, often close to $1$, which implies that the decisions of non-cooperative players would lead, in a decentralized way, to an optimal topology.
%
%However, preserving the connectivity does not necessarily imply a satisfactory performance. We therefore also studied the \emph{virus spread-performance-cost} (\textsc{VSPC}) game, where the hopcounts between the nodes are also taken into account. We have characterized the Nash Equilibria and the Price of Anarchy (PoA) in both games.

When the aspect of the shortest hopcounts is not important, we have found that only \emph{trees} (but not all) could be Nash Equilibria. In that case, surprisingly, a \emph{path graph} is the worst- and the \emph{star graph} is the best-case Nash Equilibrium for big virus infection rate $\tau$, while it is the opposite for small $\tau$. For intermediate values of $\tau$, other trees are optimal. The PoA is the highest for values of $\tau$ just above the epidemic threshold. However, the PoA is generally small and close to $1$, does not depend on the number of players, and is inversely proportional to $\tau$ and the installation cost $\alpha$. %A PoA close to $1$ has design implications: the non-cooperative players' decisions would lead, in a decentralized way, to an optimal topology.

When the hopcounts do matter, the Nash Equilibria might be formed by more complex topologies. The PoA highly depends on $\tau$, the installation links and hopcount costs $\alpha$ and $\gamma$, respectively, as shown by both theory and simulation. Although the PoA is small for most of the cases, for some intervals of those parameters, the PoA could be high. Hence, a central control and regulatory mechanism should be in place in such cases. Being able to detect those intervals, as we have done, helps in the design of optimal, efficient, virus-free and cheap overlay, P2P or wireless networks by limiting the non-cooperative freedom of the hosts' decisions.
%the optimal topology is a \emph{star} or a \emph{complete} graph, while
% and span between a \emph{star} or a \emph{complete} graph that are optimal

There are several possibilities for follow-up work, such as a study on mixed Equilibria, player coalitions, inhomogeneous costs, or time-varying networks.

\newcommand{\BIBdecl}{\setlength{\itemsep}{0.1 em}}

\bibliographystyle{ieeetran}
%\bibliography{bibliography}
% Generated by IEEEtran.bst, version: 1.12 (2007/01/11)

%\textbf{Turn of the next pages for the figures with networks visualizations.}
%\newpage
%\newpage

\linespread{0.7}

\captionsetup[subfigure]{labelformat=empty}
\begin{figure*}[p]%[h!tb]

\centering
\includegraphics[trim = 0mm 0mm 0mm 0mm,clip,width=1.0\textwidth]{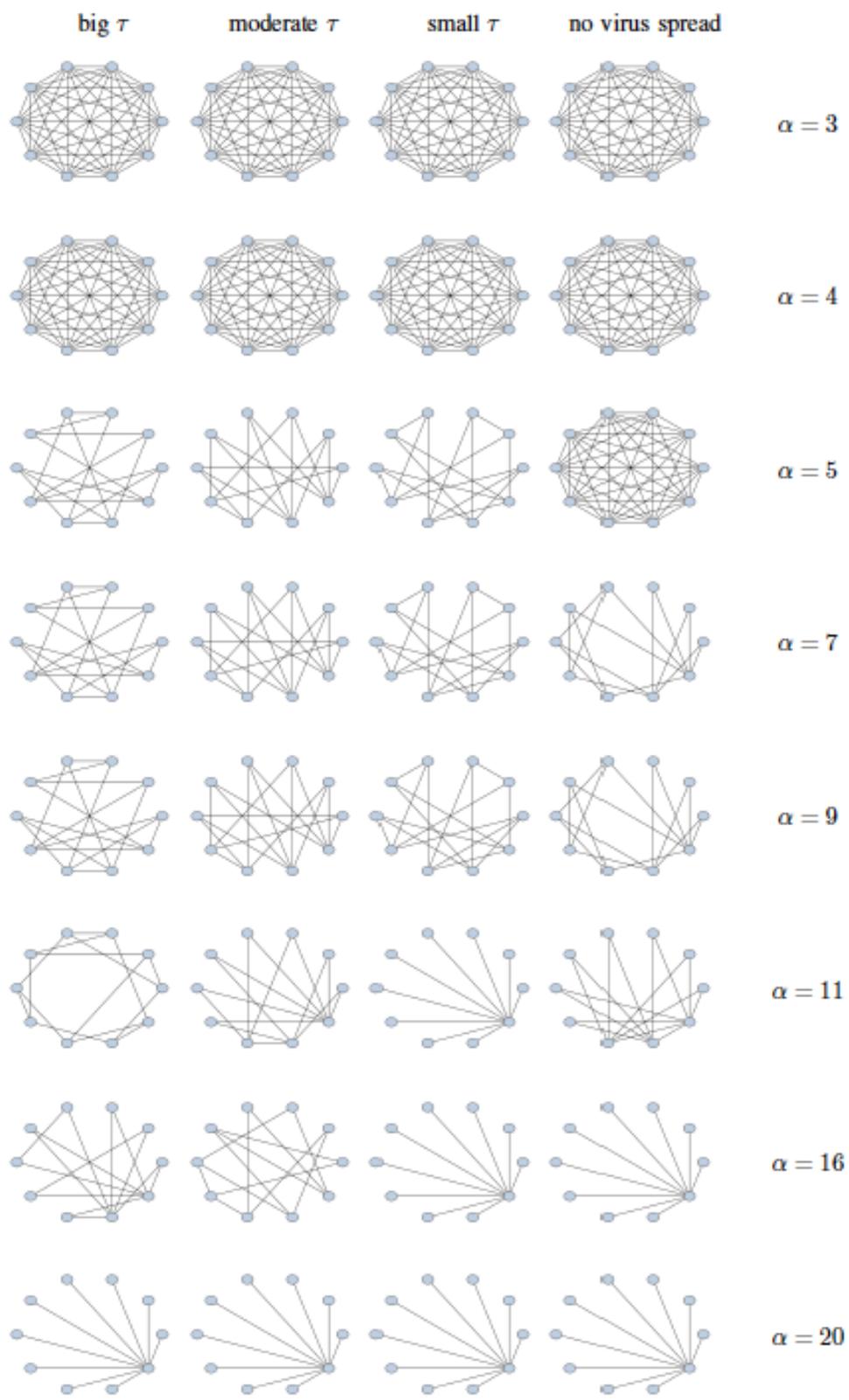}

\caption{Obtained topologies by the algorithm for $\gamma = 5$.}
 \label{fig:VizTopol2Gamma5}
\end{figure*}

\captionsetup[subfigure]{labelformat=empty}
\begin{figure*}[p]%[h!tb]

\centering
\includegraphics[trim = 0mm 0mm 0mm 0mm,clip,width=1.0\textwidth]{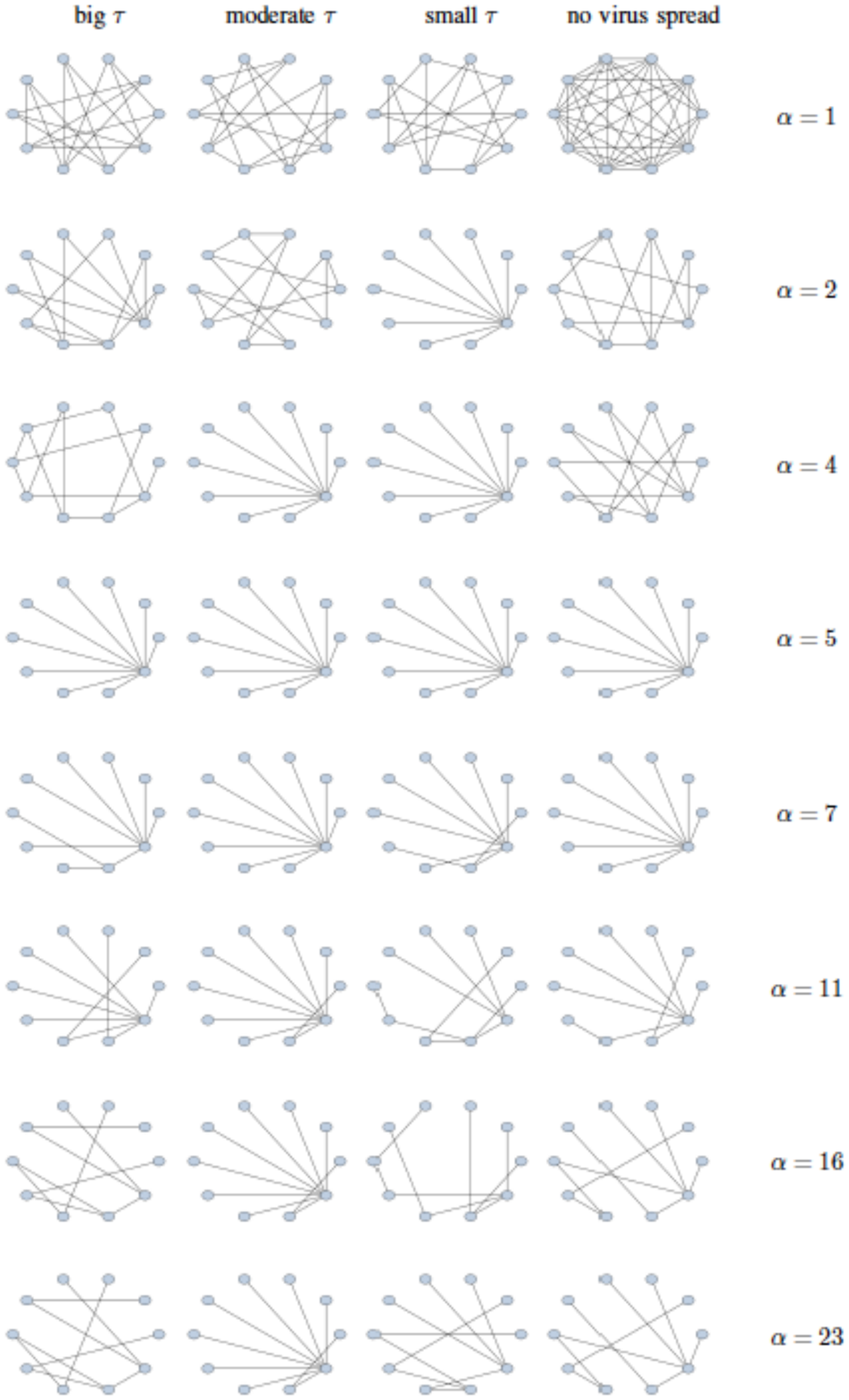}

\caption{Obtained topologies by the algorithm for $\gamma = 1$.}
 \label{fig:VizTopol2Gamma1}
\end{figure*}

\captionsetup[subfigure]{labelformat=empty}
\begin{figure*}[p]%[h!tb]

\centering
\includegraphics[trim = 0mm 0mm 0mm 0mm,clip,width=1.0\textwidth]{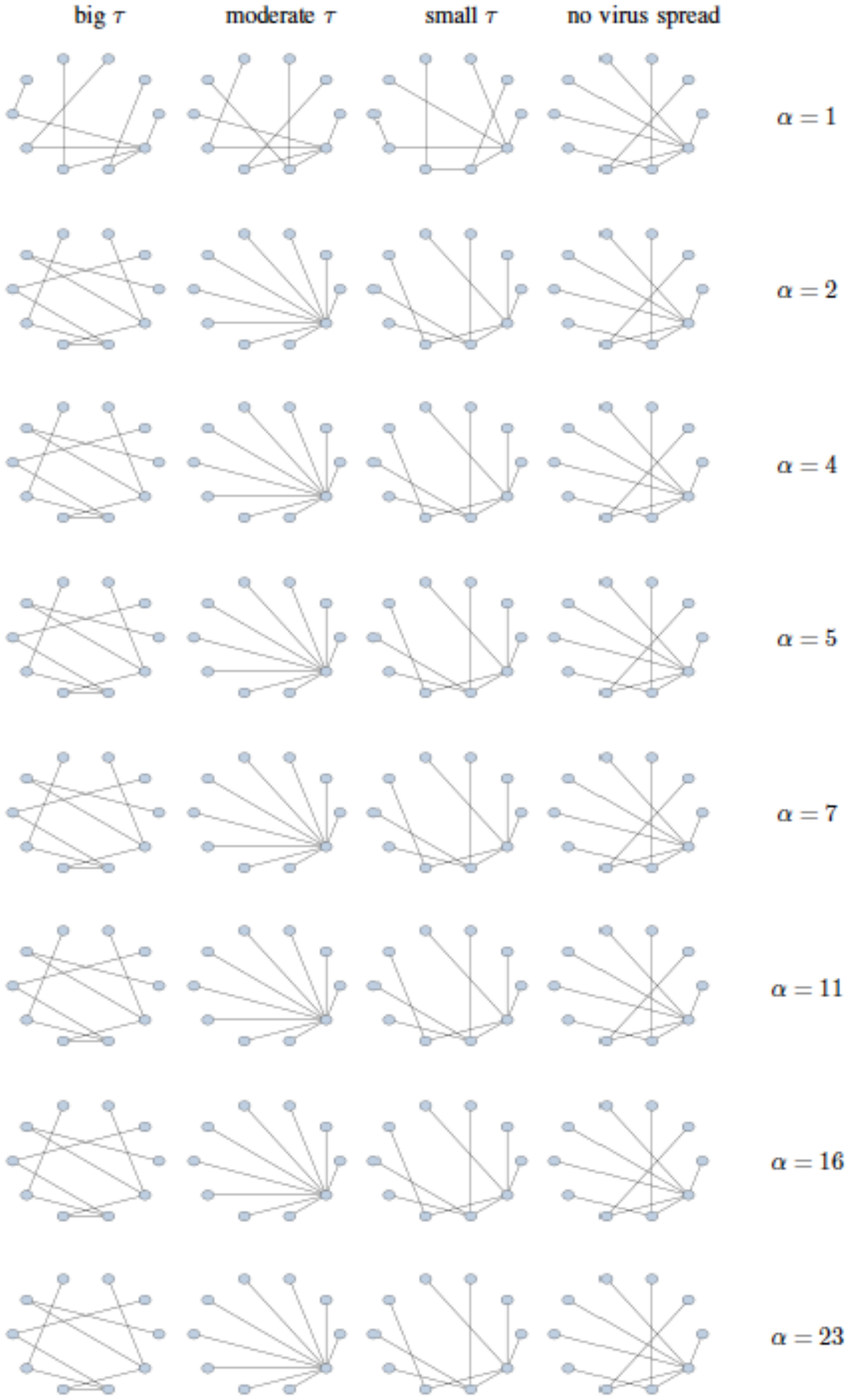}

\caption{Obtained topologies by the algorithm for $\gamma = 0.1$.}
 \label{fig:VizTopol2Gamma01}
\end{figure*}

\newpage
%\clearpage%\onecolumn

\linespread{0.9721}

\end{document}